\documentclass[10pt,conference,letterpaper]{IEEEtran}

\IEEEoverridecommandlockouts
\usepackage{float}
\restylefloat{table}
\usepackage{etoolbox}
\usepackage{titlesec}
\usepackage{adjustbox}
\usepackage{booktabs}
\usepackage{lipsum}
\usepackage{mathtools, cuted}
\usepackage{autobreak}
\usepackage[utf8]{inputenc}
\usepackage{nccmath}
\usepackage{flushend}
\usepackage{comment}
\usepackage{soul}
\usepackage{physics}
\usepackage{afterpage}
\usepackage{url}
\usepackage[hidelinks]{hyperref}
\usepackage{amsfonts,amsmath,amstext,amsthm,xspace,enumerate,graphicx}
\usepackage{enumitem}
\usepackage{algpseudocode}
\usepackage[font=footnotesize]{caption}
\DeclareCaptionFont{custom}{\fontsize{5.5}{6}\selectfont}
\usepackage[font=custom]{subcaption}
 \usepackage{hyperref}
\usepackage[commandnameprefix=always]{changes}
\usepackage[english]{babel}
\usepackage{siunitx}
\usepackage[ruled]{algorithm}
\usepackage{cases}
\usepackage{stfloats}
\usepackage{placeins} 
\usepackage{booktabs}
\usepackage{makecell}

\usepackage{subcaption} 
\usepackage{caption}

\theoremstyle{plain}
\newtheorem{theorem}{Theorem}[section]
\newtheorem{definition}{Definition}
\newtheorem*{remark*}{Remark}

\addto\extrasenglish{

}

\newcounter {optimization}
\makeatletter

\titlespacing{\subsubsection}{0pt}{1pt}{0pt}
\titlespacing{\subsection}{0pt}{1pt}{0pt}
\titlespacing{\section}{0pt}{2pt}{0pt}

\makeatletter
\def\thm@space@setup{%
  \thm@preskip=0pt
  \thm@postskip=0pt
}

\newcommand{\Hamta}[1]{{\color{red}\textbf{[(Hamta) #1]}}}

\usepackage[minnames=1,maxnames=1,style=ieee,doi=false,isbn=false]{biblatex}

\usepackage[acronym]{glossaries}
\makeglossaries
\newacronym{cp}{CP}{Content Provider}
\newacronym{ed}{ED}{Edge Caching Device}
\newacronym{mu}{MU}{Mobile User}
\newacronym{SCQ}{SCQ}{Secure Caching Quality}
\newacronym{idas}{IBAS}{ID-based aggregate signature}
\newacronym{msn}{MSN}{Mobile Social Network}
\newacronym{QoE}{QoE}{Quality of Experience}
\newacronym{rl}{RL}{Reinforcement Learning}

\usepackage{amsmath,amssymb,amsfonts}
\usepackage{graphicx}
\usepackage{textcomp}
\usepackage{xcolor}
\def\BibTeX{{\rm B\kern-.05em{\sc i\kern-.025em b}\kern-.08em
    T\kern-.1667em\lower.7ex\hbox{E}\kern-.125emX}}
\begin{document}

\title{Decentralized Edge Caching under Budget and Storage Constraints: A Game-Theoretic Approach
}

\author{\IEEEauthorblockN{Hamta Sedghani}
\IEEEauthorblockA{
\textit{Politecnico di Milano}\\
Milan, Italy \\
hamta.sedghani@polimi.it}
\and
\IEEEauthorblockN{Zahra Seyedi}
\IEEEauthorblockA{
\textit{Politecnico di Milano}\\
Milan, Italy \\
zahrasadat.seyedi@mail.polimi.it}
\and
\IEEEauthorblockN{Mauro Passacantando}
\IEEEauthorblockA{
\textit{University of Milano-Bicocca}\\
Milan, Italy \\
mauro.passacantando@unimib.it}
\and
\IEEEauthorblockN{
Danilo Ardagna}
\IEEEauthorblockA{
\textit{Politecnico di Milano}\\
Milan, Italy \\
danilo.ardagna@polimi.it}
}

\maketitle

\begin{abstract}

The rapid growth of mobile social networks (MSNs) has significantly increased the demand for low-latency and reliable content delivery, motivating the deployment of edge caching systems. In practice, multiple content providers (CPs) compete for the limited storage resources of edge devices (EDs), while facing heterogeneous budgets and operational costs. This paper investigates a decentralized multi-CP edge caching framework that jointly accounts for CP budget constraints, ED storage limitations, and strategic interactions among all entities.
We formulate the interaction between CPs and EDs as a hierarchical game, combining a Stackelberg model for CP–ED interactions with a non-cooperative game among competing CPs. Under light storage constraints, we show that CP competition constitutes an exact potential game, ensuring the existence of a pure-strategy Nash equilibrium and enabling decentralized convergence. 
When storage constraints are binding, the resulting game loses this structure; nevertheless, extensive simulations demonstrate stable and efficient convergence in practice.
Through a comprehensive numerical evaluation, we show that convergence behavior is primarily driven by CP competition rather than the scale of edge infrastructure. We further reveal that storage scarcity fundamentally alters economic outcomes, amplifying inequality among CPs while increasing the relative bargaining power of EDs. The proposed framework provides a scalable and economically grounded solution for decentralized resource allocation in multi-provider edge caching systems.
\end{abstract}

\begin{IEEEkeywords}
Edge Caching Systems, Pricing Mechanism, Game Theory, IoT
\end{IEEEkeywords}

\section{Introduction}\label{sec:intro}

The rapid growth of \glspl{mu} and the explosive popularity of \glspl{msn}, such as YouTube, TikTok, and Instagram, have driven an unprecedented surge in wireless data traffic~\cite{luo2025cost}. Global mobile data traffic is expected to increase by 17\%, reaching nearly 430 exabytes per month by 2030~\cite{ericsson}, intensifying the demand for low-latency content delivery. However, repeatedly retrieving popular content from distant \glspl{cp} or centralized servers increases delay and network congestion. Edge caching addresses this issue by storing popular \gls{msn} content closer to \glspl{mu}, thereby reducing latency, alleviating congestion, improving \gls{QoE}, and lowering delivery costs~\cite{zhang2025survey, luo2025cost}.

Despite these advantages, edge caching introduces several challenges. \glspl{cp} rely on \glspl{ed} for content delivery but must balance caching benefits against constraints such as limited budget, bandwidth, and storage~\cite{chen2024dynamic}. Meanwhile, self-interested \glspl{ed} may behave strategically or maliciously, and their open nature makes them vulnerable to security threats such as DDoS and man-in-the-middle attacks~\cite{Xu2020, ma2024survey}. In addition, increasing competition among \glspl{mu} for limited resources can degrade \gls{QoE}~\cite{he2024design}. These challenges call for robust and efficient edge caching mechanisms.

Game theory has been widely used to model interactions in edge caching systems, with approaches including auctions, pricing, and collaborative strategies~\cite{yu2025auction}. However, most existing works focus on a single \gls{cp} interacting with multiple \glspl{ed}, overlooking competition among multiple \glspl{cp} \cite{khan2024content, liao2025context}.

In practice, \glspl{ed} serve multiple \glspl{cp}, leading to competition over limited storage resources. Each \gls{cp}, constrained by its budget, must strategically allocate resources to maximize its utility. This setting introduces complex interactions not only between \glspl{cp} and \glspl{ed}, but also among competing \glspl{cp}, motivating the need for game-theoretic models that capture budget constraints and shared resource competition \cite{sedghani2021}.

Motivated by Xu et al.~\cite{Xu2020}, we significantly extended their work and developed a practical and efficient secure edge caching framework for multi-\gls{cp} systems, in which multiple \glspl{cp} compete for caching services on nearby \glspl{ed} under budget constraints. We introduce a lightweight, resource-aware optimization model that bounds content- and device-level payments to enable scalable deployment in edge environments. 
Under typical operating regimes, where storage provisioning avoids frequent contention, we model \glspl{cp} competition as a potential game and the \gls{cp}–\gls{ed} interaction as a Stackelberg game, which together ensure the existence of a Nash equilibrium and enable decentralized convergence via a finite improvement dynamics. 
When storage constraints are strictly binding, the 
equilibrium existence is not guaranteed; nevertheless, extensive experimental results demonstrate that the proposed decentralized protocol converges reliably in practice and achieves strong performance in terms of cost efficiency, scalability, interaction efficiency, and robustness under dynamic network conditions.

The remainder of this paper is structured as follows: 
\autoref{sec:related} reviews related work.
In \autoref{sec:system_model}, we introduce our system model, detailing the network, content, and threat model. We then present the problem formulation in \autoref{sec:problem_formulation}. Our analysis of the optimal strategy obtained through the game solution is detailed in \autoref{sec:Stackelberg_game}. \autoref{sec:experimental} provides an evaluation of our proposed scheme. Conclusions are finally drawn in \autoref{sec:conclusions}.

\vspace{-1mm}
\section{Related Work}\label{sec:related}
Edge caching is a key technique for reducing latency and backhaul congestion by storing popular content closer to users~\cite{zhang2025survey}. Existing approaches include coded and non-coded caching, as well as proactive, cooperative, and adaptive strategies designed to improve cache efficiency under limited storage and bandwidth resources~\cite{feng2025federated, niknia2025attention}. Most studies model a single \gls{cp} interacting with multiple \glspl{ed} that provide caching services~\cite{khan2024content, liao2025context}. 
With the rise of \glspl{msn}, recent works have further incorporated user-centric features, such as mobility patterns, social relationships, and contextual information, to enhance caching decisions and content relevance~\cite{Xu2020, chaudhary2025pencache}. 
However, these models generally assume a single \gls{cp} and fail to capture realistic scenarios involving multiple competing providers.

From an optimization perspective, prior work has focused on resource pricing, allocation, delay minimization, and QoE enhancement~\cite{guo2025optimal}. Representative approaches include pricing-based caching schemes, delay-aware clustering strategies, and cost-efficient caching models that account for content freshness and budget limitations~\cite{yan2021pricing, doostmohammadi2023dynamic, abolhassani2024optimal}. While some studies incorporate \gls{cp} budget constraints into the optimization process~\cite{wang2025investment, yuan2024efficient}, they typically assume a single \gls{cp}, overlooking competition among multiple \glspl{cp} for limited edge storage resources.

To solve edge caching problems, a wide range of methods has been proposed, including mathematical optimization, auction mechanisms, game theory, and learning-based approaches~\cite{ismail2025survey}. 
Game-theoretic models, such as Stackelberg, contract-based, and non-cooperative games, are widely used to capture interactions among \glspl{cp}, \glspl{ed}, and \glspl{mu}, enabling efficient pricing and resource allocation~\cite{cheng2024stackelberg, fan2024contract, jiang2022game}. 
However, these models generally neglect direct competition among multiple \glspl{cp}. In parallel, reinforcement learning approaches have been applied to handle dynamic demand and uncertainty, improving cache hit rates and adaptability~\cite{zhong2020deep, wei2024cooperative, Xu2020}. Despite their effectiveness, RL-based methods often incur high computational and communication overhead and suffer from limited real-time efficiency, particularly in large-scale multi-\gls{cp} environments.

Although existing works have advanced edge caching through diverse strategies and methodologies, they often overlook the joint impact of multiple competing \glspl{cp}, heterogeneous budgets, and limited edge storage. To address these limitations, we propose a resource-aware edge caching framework that explicitly models multi-\gls{cp} competition under budget and capacity constraints. Our approach employs a two-layer game-theoretic formulation and a lightweight decentralized mechanism to achieve efficient and scalable resource allocation without centralized coordination.

\vspace{-1mm}
\section{System Model}\label{sec:system_model}
This section presents the system model. 
We first introduce the network and content architecture, followed by the threat model mostly by extending the work  in \cite{Xu2020}. 
For convenience, all notations used in the paper are summarized in 
Table~\ref{tab:notations}  in  \autoref{sec:Appendix}.

\subsection{Network Architecture}
\label{ss:NetworkArch}

The network architecture of the secure edge caching model is depicted in \autoref{fig:system_model}. It comprises three key entities: multiple content providers (\glspl{cp}), a set of edge caching devices (\glspl{ed}), and a group of mobile users (\glspl{mu}). The roles and responsibilities of these entities within the system are specified as follows:
\begin{figure}
    \centering
    \includegraphics[width=0.7\linewidth,height=3cm]{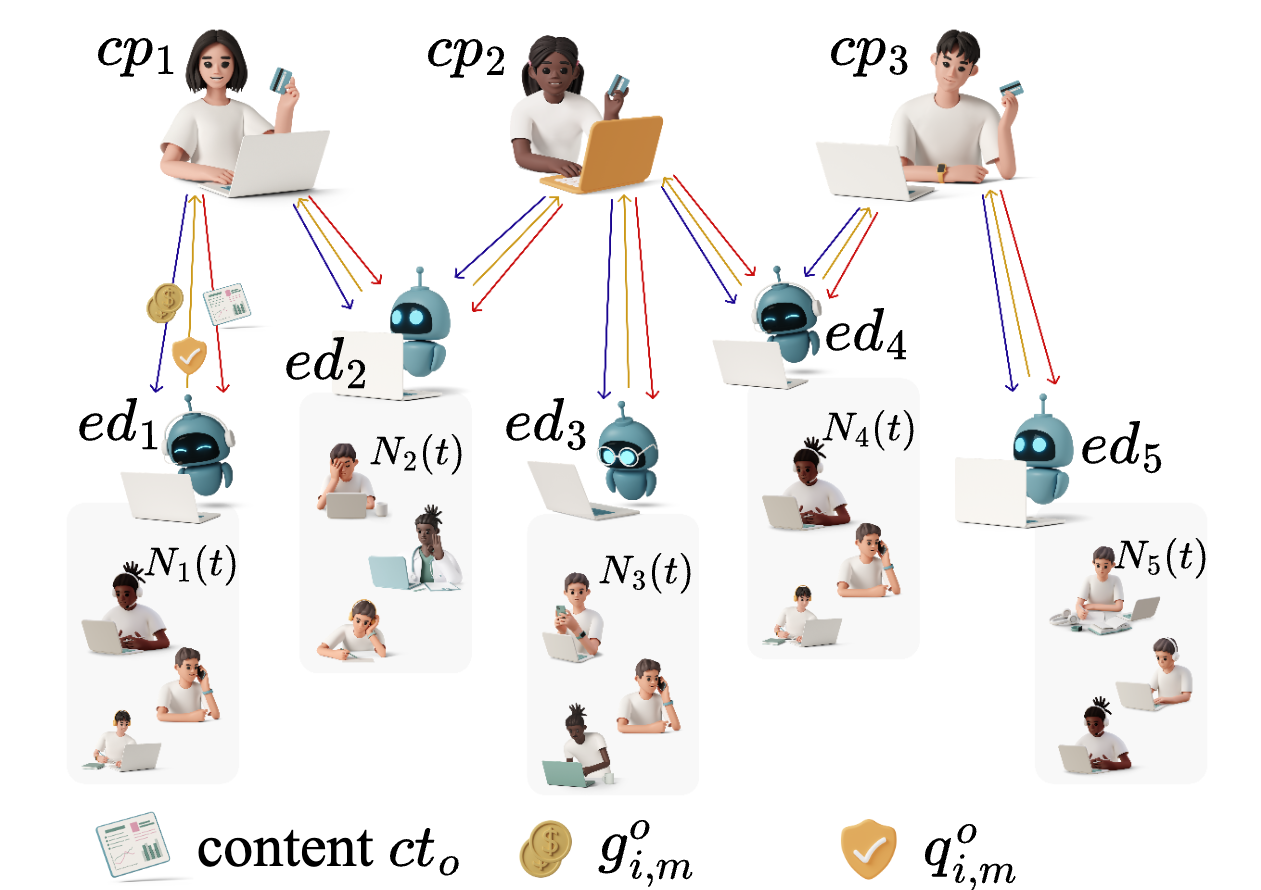}
    \caption{Network Architecture}
    \label{fig:system_model}
\end{figure}
\begin{enumerate}[leftmargin=3mm]
	\item \textbf{\glspl{cp}}: 
    Denoted as $\mathcal{O} = \{cp_1, \dots, cp_o, \dots, cp_O\}$, the \glspl{cp} serve as the origin of content (e.g., videos or files) requested by \glspl{mu}. They are typically deployed at geographically distant locations. They aim to expand their user base and maximize revenue by enabling \glspl{mu} to access popular content efficiently and securely. Nevertheless, their remote placement introduces considerable latency in content delivery, thereby degrading users’ \gls{QoE}. To mitigate this challenge, \glspl{cp} seek to securely cache popular content on edge devices, providing faster access while containing operational costs.

	\item \textbf{\glspl{ed}}: 
    The set of \glspl{ed} is represented as $\mathcal{I} = \{ ed_1, \dots, ed_i, \dots, ed_I\}$. Positioned closer to \glspl{mu} (e.g., within local environments such as schools or hospitals), these devices cache popular content to reduce access latency. Strategically deployed at the edge of the backhaul network, they enable \glspl{mu} to retrieve content from the nearest caching node with minimal delay. To enhance coverage and reduce redundancy, each \gls{ed} is installed at a distinct location. Nonetheless, \glspl{ed} exhibit certain behavioral traits, such as selfishness and susceptibility to open-access threats, which impact system performance and raise security concerns. 

    To address these issues, \glspl{ed} may employ security mechanisms, such as disaster recovery modes, that distribute content replicas across multiple locations to preserve the integrity and confidentiality of cached data. 
    By adjusting the level of secure caching, each edge device can deliver different degrees of protection for stored content. In this framework, the \gls{SCQ} offered by $ed_i \in \mathcal{I}$ for content $ct_m^o$, belonging to $cp_o$, is denoted as $q^{o}_{i,m}$, and defined as:	
	\begin{equation}
    q^{o}_{i,m}
    \begin{cases}
    = -1, & \text{if $ed_i$ is selfish,}\\
    \in [0,1], &\text{otherwise},
    \end{cases}
    \end{equation}
    where, as in \cite{Xu2020}:
	\begin{itemize}
        \item $q^{o}_{i,m} = -1$ denotes that the $ed_i$ behaves selfishly and provides manipulated or corrupted content to \glspl{mu}.
        \item $q^{o}_{i,m} = 1$ represents the maximum \gls{SCQ} provided by $ed_i$.
        \item $0 < q^{o}_{i,m} < 1$: Corresponds to an intermediate level of \gls{SCQ} provided by $ed_i$.
        \item $q^{o}_{i,m} = 0$ indicates that $ed_i$ does not engage in caching $ct_m^o$ and merely forwards the content.
    \end{itemize}

	\item  \textbf{\glspl{mu}}:
    These entities request content from nearby \glspl{ed} to reduce latency. Their mobility and interaction behavior are characterized as follows:  

    \begin{itemize}
        \item \textit{Content Requests}: \glspl{mu} obtain cached content directly from nearby \glspl{ed}, ensuring low-latency delivery. If the requested content is not cached, the \gls{ed} relays the request to the \gls{cp} or another \gls{ed}.

        \item \textit{Mobility Model}: The movement of \glspl{mu} is modeled as a random walk process. Each \gls{mu}’s velocity is uniformly distributed within $[V_{\text{min}}, V_{\text{max}}]$, and the movement direction is uniformly random over $[0, 2\pi]$. An \gls{mu} may remain stationary for a random duration within $[0, T_{\text{max}}]$ before resuming movement.


        \item \textit{Coverage and Variability}: At time slot $t$, the set of \glspl{mu} within the coverage area of $ed_i$ is denoted by $\mathcal{N}_i(t) = \{n_{i,1}, n_{i,2}, \dots,  n_{i,N_i(t)}\}$, where the cardinality $N_i(t)$ varies over time as a result of user mobility.  

        \item \textit{Feedback Mechanism}: \glspl{mu} report their perceived \gls{SCQ} to \glspl{cp}, enabling the evaluation of caching reliability and quality.
	\end{itemize}

\end{enumerate}

\subsection{Content Architecture}\label{ss:ContentArch}
Over the time horizon $\{1, 2, \dots, T\}$, \glspl{mu} generate requests for different content items. The complete set of available contents, belonging to $cp_o$, within this interval is denoted by $\mathcal{M}_{o} = \{ct^o_1, ct^o_2, \ldots, ct^o_{M_o}\}$. 
Each content item in $\mathcal{M}_{o}$ may be characterized by attributes such as popularity~\cite{zyrianoff2024cache}, importance~\cite{zhang2024cache}, and request distribution~\cite{yang2018content}, defined as follows:

\begin{itemize}[leftmargin=5mm]
	\item \textit{Popularity Distribution}:
    The popularity of the contents is modeled through a probability distribution vector $\mathbf{f}^{o} = [f_{1}^{o}, \dots, f_{M_o}^{o}]$, where each entry $f_{m}^{o}$ specifies the likelihood that an \gls{mu} requests content $ct^o_m$. This distribution is derived by ranking contents in decreasing order of their request frequencies observed during a given time window (e.g., one day or one week). Accordingly, the popularity of $ct^o_m$ is expressed as $f_{m}^o$ and is formally given by (as in \cite{Xu2020})
    \vspace{-3mm}
\begin{equation}\label{eq:Popularity}
    f_{m}^o=
    \left(
    (\tau (m))^{\gamma } \sum _{m =1}^{M_{o}} m^{-\gamma }
    \right)^{-1}.
\end{equation}
\begin{itemize}
\item $\tau(m)$ denotes the index of $ct^o_m$ in the ordering of all contents arranged by request frequency in descending order. According to~\eqref{eq:Popularity}, a smaller index (i.e., higher ranking in terms of requests) corresponds to greater content popularity.

\item $\gamma \geq 0$ is the parameter governing the skewness of the popularity distribution. For $\gamma = 0$, popularity is uniformly distributed across all contents. As $\gamma$ increases, the distribution becomes increasingly skewed, with a small subset of highly popular contents accounting for the majority of requests.
\end{itemize}

\item \textit{Importance Distribution}:
In addition to popularity, the importance of content must be evaluated, as different items may vary in significance. 
The importance is represented by the vector $\mathbf{p}^{o} = [p_{1}^{o}, \dots, p_{M_o}^{o}]$, where $p_{m}^{o}$ denotes the importance assigned to $ct^o_m$. Formally, the importance $p_m^o$ (as in \cite{Xu2020}) is defined as
\begin{equation} 
p_{m}^{o} = 
\left(
(\kappa (m))^{\beta } \sum _{m=1}^{M_o} m^{-\beta }
\right)^{-1}.
\end{equation}
\begin{itemize}
\item $\kappa(m)$ denotes the index of $ct^o_m$ in the descending priority order of all contents in $CT_\mathcal{M}^{o}$. A smaller index corresponds to greater importance.  

\item $\beta \geq 0$ is a parameter that controls the skewness of the importance distribution. Larger values of $\beta$ result in a distribution where a small subset of high-priority contents accounts for the majority of the overall importance.
\end{itemize}

\item \textit{Request Distribution}:     
Content demand differs across \glspl{ed} owing to variations in user preferences within their respective coverage areas. For an $ed_i \in \mathcal{I}$ the request distribution is represented by $\mathbf{r}_{i,o} = [r_{i,1}^{o},  \dots, r_{i,M_o}^{o}]$. 
Here, $r_{i,m}^{o}$ denotes the ratio of \glspl{mu} requesting $ct_m^o$ from $ed_i$. Accordingly, contents with frequent requests from a large number of \glspl{mu} within the coverage area of an \gls{ed} are prioritized for caching on that device.
\end{itemize}

Edge caching devices may behave selfishly or be exposed to open-access vulnerabilities. Rather than modeling specific attack mechanisms, we abstract these effects through an SCQ metric that captures the reliability of cached content.

\vspace{-1mm}
\section{Problem Formulation}\label{sec:problem_formulation}
Within the network setting, when \glspl{ed} deliver high-quality secure caching services, the \gls{QoE} of \glspl{mu} improves, which in turn benefits the \glspl{cp} by attracting more users to access their content, thereby increasing their revenue. Hence, both \glspl{cp} and \glspl{mu} share a common interest in ensuring secure and reliable caching at the edge. To incentivize \glspl{ed} and discourage selfish or malicious behavior, \glspl{cp} adopt payment mechanisms that compensate \glspl{ed} based on the provided \gls{SCQ}. Each \gls{ed}, in response, selects its level of \gls{SCQ} service so as to maximize its own profit, subject to operational and security costs.  
The coexistence of multiple \glspl{cp} gives rise to competition, as all \glspl{cp} seek access to the limited caching resources of \glspl{ed}. Each \gls{cp} aims to secure reliable and high-quality caching services at minimum cost, while \glspl{ed} pursue maximum profit by adjusting their service quality in response to payments. This interaction creates a hierarchical and competitive decision-making process: (i) competition among \glspl{cp} for edge caching resources, and (ii) leader–follower interactions between each \gls{cp} and the \glspl{ed}.  



\subsection{CP Profit Model}

The utility function of each $cp_o$ is denoted as $U_{o}(\mathbf{g}^o, \mathbf{q}^o)$, where $\mathbf{g}^o$ represents the payment strategy of $cp_o$ to all \glspl{ed} for the contents in $\mathcal{M}_o$, and $\mathbf{q}^o$ denotes the corresponding \gls{SCQ} levels provided by \glspl{ed}. Each \gls{cp} adopts a non-uniform payment policy, assigning different payment values to different contents across different \glspl{ed}. Formally, the payment strategy $\mathbf{g}^o$ is expressed as
\begin{equation*}
	\mathbf{g}^o
	\!=\!
	\left [ \begin{array}{llll} 
    g_{1,1}^o,& g_{1,2}^o,& \cdots,& g_{1,M_o}^o\\ 
    g_{2,1}^o,& g_{2,2}^o,& \cdots,& g_{2,M_o}^o\\ 
    \; \vdots & \vdots & \ddots & \vdots \;\\ 
    g_{I,1}^o,& g_{I,2}^o,& \cdots,& g_{I,M_o}^o 
    \end{array} \right]
    \!=\!
    [\mathbf{g}_{1}^o, {\mathbf{g}}_{2}^o, \cdots, \mathbf{g}^o_{I}]^{\textbf {T}},
\end{equation*}
\noindent
where $g_{i,m}^o$ denotes the payment made to $ed_i$ by $cp_o$ to provide secure caching of $ct_m$. 
Accordingly, the matrix capturing the \gls{SCQ} levels offered by all \glspl{ed} is given by
\begin{equation*} 
	\mathbf{q}^o 
	\!=\! 
	\left [
    \begin{array}{lllll}
    {q_{1,1}^o},& {q_{1,2}^o},& \cdots,& {q_{1,M_o}^o}\\ 
    {q_{2,1}^o},& {q_{2,2}^o},& \cdots,& {q_{2,M_o}^o}\\ 
    \vdots & \vdots & \ddots & \vdots \;\\
    {q_{I,1}^o},& {q_{I,2}^o},& \cdots,& {q_{I,M_o}^o}\; 
    \end{array} 
    \right]
    \!=\!
    [\mathbf{q}^o_{1}, \mathbf{q}^o_{2}, \cdots, \mathbf{q}^o_{I}]^{\textbf {T}}.
\end{equation*}
As each \gls{cp} distributes content across multiple \glspl{ed}, its overall utility is obtained by aggregating the individual utilities associated with each content item cached on a given \gls{ed}. 
Hence, the total utility of each $cp_o$ is formulated as
\begin{equation*} 
	U_{o}({\mathbf{g}^o}, {\mathbf{q}^o}) = \sum \limits _{i = 1}^{I} {\sum \limits _{m = 1}^{M_o} {u_{o}({g_{i,m}},{q_{i,m}})} },
\end{equation*}
\noindent
where ${u_{o}(g_{i,m}^o, q_{i,m}^o)}$ denotes the utility gained by the $cp_o$ from caching content $ct^o_m$ on $ed_i$. 
Since the $cp_o$'s utility is determined by the benefit derived from the secure caching service minus the corresponding payment, as initially proposed in~\cite{Xu2020}, it can be expressed as
\begin{equation*} 
	u_{o}(g_{i,m}^o,q_{i,m}^o) = F_{i,m}^o(q_{i,m}^o) - C_{i,m}^o(g_{i,m}^o,q_{i,m}^o).
\end{equation*}
$F_{i,m}^o(q_{i,m}^o)$ represents the satisfaction function of $cp_o$ for caching $ct_m$ on $ed_i$ with a \gls{SCQ} level $q_{i,m}^o$. 
Conversely, $C_{i,m}^o(g_{i,m}^o, q_{i,m}^o)$ captures the $cp_o$'s cost incurred for secure caching of $ct_m$ at $ed_i$. 
Following common practice in resource allocation studies~\cite{zhang2017computing}, the satisfaction function is modeled logarithmically, and is given by
\begin{align}\label{eq:satisfaction_func}
\footnotesize
F_{i,m}^o(q_{i,m}^o) = 
\begin{cases} 
\alpha r_{i,m}^o N_{i}(t) f_{m}^o p_{m}^o \log (1 + q_{i,m}^o), & \text{if $q_{i,m}^o \in [0,1]$},
\\ 
\varsigma r_{i,m}^o N_{i}(t) f_{m}^o p_{m}^o q_{i,m}^o, & \text{if $q_{i,m}^o = -1$}.
\end{cases}
\end{align}
Here, $\alpha > 0$ denotes the satisfaction parameter for secure content caching, while $\varsigma > 0$ reflects the penalty parameter capturing the loss of satisfaction. 
As expressed in~\eqref{eq:satisfaction_func}, the satisfaction function operates piecewise: for $q_{i,m}^o > 0$, the $cp_o$ obtains positive satisfaction from the provided \gls{SCQ}. For $q_{i,m}^o = -1$, the $cp_o$ is deceived by $ed_i$, resulting in negative satisfaction.  
In addition, as $cp_o$ must compensate \glspl{ed} for secure caching services, the corresponding cost function $C_{i,m}^o(g_{i,m}^o, q_{i,m}^o)$ is formulated as:
\begin{align}\label{eq:cost_func} 
C_{i,m}^o(g_{i,m}^o,q_{i,m}^o) = 
\begin{cases} 
g_{i,m}^o \theta q_{i,m}^o, & \text{if $q_{i,m}^o \in [{0,1}]$},
\\ 
0, & \text{if $q_{i,m}^o = -1$}.
\end{cases}
\end{align}
where $g_{i,m}^o$ denotes the payment by $cp_o$ to $ed_i$ for delivering the highest-quality caching service ($q_{i,m}^o=1$) for $ct_m$, while $\theta$ is the payment adjustment parameter; \eqref{eq:cost_func} ensures that an \gls{ed} receives no compensation if it abstains from caching or cheats. Accordingly, by aggregating over all contents and \glspl{ed}, the overall utility of the $cp_o$ at time slot $t$ is expressed as:
{ \small
\begin{align}\label{eq:cp_utility}
U_{o}({\mathbf{g}^o}, {\mathbf{q}^o}) =
\sum \limits _{i = 1}^{I} \sum \limits _{m = 1}^{M_o}  
{x_{i,m}^o} \left[ 
\alpha {r_{i,m}^o}{N_{i}(t)}{f_{m}^o}{p_{m}^o}\log (1 + {q_{i,m}^o}) \right.
\nonumber
\\
 \left. - {g_{i,m}^o}\theta {q_{i,m}^o} \right] 
+\, (1 - {x_{i,m}^o})\varsigma {r_{i,m}^o}{N_{i}(t)}{f_{m}^o}{p_{m}^o}{q_{i,m}^o}
\end{align}
}
where $x_{i,m}^o \in \{0,1\}$ is defined as the service integrity indicator associated with $cp_o$, $ed_i$, and $ct_m$:  
\begin{equation*} 
x_{i,m}^o= 
\begin{cases} 
1, & \text{if $q_{i,m}^o \in [{0,1}]$},
\\ 
0, & \text{if $q_{i,m}^o = -1$}. 
\end{cases}
\end{equation*}
Given limited budgets and varying content importance, each \gls{cp} allocates a global budget $G_m^o$ to content $ct_m$, representing the maximum total caching cost across all \glspl{ed}. For each content, the \gls{cp} also sets lower and upper per-device payment bounds $G_m^{o,L}$ and $G_m^{o,U}$, with $G_m^{o,U} \leq G_m^o$. Under these constraints, the profit maximization problem for $cp_o$ and content $ct_m$ is formulated as follows:
\begin{equation*}\label{eq:platformOBJ}
	\max_{\mathbf{g_m^o}}\ 
	U_{o}({\mathbf{g}^o}, {\mathbf{q}^o}) =
	\sum_{i=1}^I {F_{i,m}^o}({q_{i,m}^{o*}})  - g_{i,m}^o \theta \, q_{i,m}^{o*}(\mathbf{g_m^o})
\end{equation*}
subject to:
{\small
\begin{align}
	& q_{i,m}^{o*}(\mathbf{g}_m^o) = 
	\arg\max_{\mathbf{q}_{i}^o} 
	\{
	x_{i,m}^o
	(g_{i,m}^o \theta q_{i,m}^o - 
    c_{i} \nu (q_{i,m}^m)^{2}) 
    \nonumber &
    \\
	& 
    \qquad - (1 - {x_{i,m}^o}) \psi _{i} : q_{i,m}^o \in [0,1] \cup \{-1\}
	\},
 \hspace{1cm} \forall i \in \mathcal{I},
 \label{eq:EDoptresp}
	\\
	&
	G_{m}^{o,L} \leq g_{i,m}^o \leq G_m^{o,U} 
    \hspace{4cm}
	\forall i \in \mathcal{I}, 
	\label{c:priceBound}
	\\
	& 
	\sum_{i=1}^I g_{i,m}^o \leq G_m^o. 
	\label{c:budgetConstraint}
\end{align}
}
\noindent Here, $q_{i,m}^{o*}(\mathbf{g_m^o})$ is the optimal strategy of $ed_i$ from~\eqref{eq:best_reply}. 



\subsection{ED Profit Model}
\label{ss:ed_problem}

Each \gls{ed} evaluates its utility as the net benefit obtained from secure caching, defined as the difference between the payment received from \glspl{cp} and the corresponding service cost. 
Thus, the utility function of $ed_i$, computed as in~\cite{Xu2020}, is given by
\begin{equation} 
	U_{i}(\mathbf{q}_i, \mathbf{g}_i) = L_{i}(\mathbf{q}_i, \mathbf{g}_i) - \Phi _{i}(\mathbf{q}_i),
\end{equation}
where $\mathbf{g_i} = (\mathbf{g}_i^1, \dots, \mathbf{g}_i^O)$ denotes the vector of payment strategies of all \glspl{cp} to $ed_i$, and $\mathbf{q}_i = (\mathbf{q}_i^1, \dots , \mathbf{q}_i^O)$ represents the vector of \gls{SCQ} that $ed_i$ provides to all \glspl{cp}.

Here, $L_{i}(\mathbf{q}_{i}, \mathbf{g}_i)$ represents the payment received from the all \glspl{cp}, determined by the payment strategy $\mathbf{g}_i^o$ and the vector of \gls{SCQ} levels $\mathbf{q}_i^o$ across the $M_o$ contents, i.e.,
\begin{equation*} 
L_{i}(\mathbf{q}_{i}, \mathbf{g}_i) 
= 
\sum \limits_{o = 1}^{O} \sum \limits_{m=1}^{M_o} C_{i,m}^{o}({g_{i,m}^{o}},q_{i,m}^{o}).
\end{equation*}
The function $\Phi_{i}(\mathbf{q}_i)$ characterizes the service cost of $ed_i$ when providing secure caching to all \glspl{cp}. 
As higher \gls{SCQ} levels demand additional computational, storage, and security resources, the cost naturally increases with service quality. Accordingly, the cost function of $ed_i$ is given by
\begin{equation*}
	\Phi _{i}(\mathbf{q}_{i}) = \sum \limits _{o = 1}^{O} \sum \limits _{m=1}^{M_o} \varphi _{i,m}^{o}({q_{i,m}^{o}}),
\end{equation*}
where $\varphi_{i,m}^{o}(q_{i,m}^{o})$ denotes the cost incurred by $ed_i$ for a secure caching service of quality $q_{i,m}^{o}$ on $ct_m$ for $cp_o$, and is expressed as
\begin{equation} \label{eq:ed_cost} 
	{\varphi _{i,m}^{o}}({q_{i,m}^{o}}) = \begin{cases} {c_{i}}\nu (q_{i,m}^{o})^2,
    & 
    \text{if } q_{i,m}^{o} \in [{0,1}],
    \\ 
    {\psi _{i}},
    &
    \text{if } q_{i,m}^{o} = - 1. 
    \end{cases}
\end{equation}
The parameter $c_i$ represents the cost incurred by $ed_i$ in delivering the highest level of \gls{SCQ}, capturing its total expenditure for this service \cite{Xu2020}. The term $\nu$ is an adjustment parameter specific to $ed_i$, while $\psi_i$ denotes a fixed value reflecting the resource consumption of $ed_i$ (e.g., power, bandwidth) when engaging in cheating behavior against \glspl{cp}~\cite{Xu2020, liu2017incentive, xu2017secure}. 
Accordingly, the utility function of $ed_i$ can be expressed as
\begin{align*} 
	U_{i}(\mathbf{q}_i, \mathbf{g}_i)
	=\sum \limits _{o = 1}^{O} \sum \limits _{m=1 }^{M_o} u_{i,m}^{o}(q_{i,m}^{o},g_{i,m}^{o}),
\end{align*}
where ${u_{i,m}^{o}}({q_{i,m}^{o}},{g_{i,m}^{o}})$ is the utility of $ed_i$ to securely cache $ct^o_m$. Here, we have
{\small
\begin{equation*} 
	u_{i,m}^{o}(q_{i,m}^{o},g_{i,m}^{o}) = 
	\begin{cases} 
		g_{i,m}^{o}\theta q_{i,m}^{o} - c_{i}\nu (q_{i,m}^{o})^2, 
        & 
        \text{if } q_{i,m}^{o} \in [0,1],
        \\
		- \psi_i, 
        &
        \text{if } q_{i,m}^{o} = - 1. 
	\end{cases}
\end{equation*}
}
From Eq. \eqref{eq:cost_func} and Eq. \eqref{eq:ed_cost}, $C_{i,m}^o(0,-1)=C_{i,m}^o(0,0)=0$, whereas  $\varphi _{i,m}^{o}(-1)>\varphi _{i,m}^{o}(0)=0$. This implies that, when evaluating the quality of the secure caching service provided by each edge device, a device engaging in cheating behavior attains a lower utility than one that does not participate in secure content caching. Specifically, $u_{i,m}^{o}(0, -1)=- \psi_i< u_{i,m}^{o}(0, 0)$. Hence, a zero-payment penalty effectively discourages selfish behavior among nodes. Consequently, the optimization problem for maximizing the utility of $ed_i$ can be written as:
{\small
\begin{align}\label{eq:ed_problem}
	\max\limits_{q_i, y_i} \ 
    \sum\limits_{o = 1}^{O} \sum \limits_{m =1}^{M_o} 
    \left[
        {g_{i,m}^{o}}\theta {q_{i,m}^{o}} - {c_{i}}\nu (q_{i,m}^{o})^2
    \right]
\end{align}
}
\noindent subject to:
{\small
\begin{alignat}{1}
&\sum_{o=1}^{O} \sum_{m=1}^{M_o} \delta_{i,m}^{o} y_{i,m}^{o} \leq D_i,
\label{c:storageLimit} 
\\
&q_{i,m}^{o} \leq y_{i,m}^{o}, \qquad  \forall o=1,\dots,O,\ m=1,\dots,M_o,
\label{c:q_i_y_i} 
\\
&q_{i,m}^{o} \geq \epsilon y_{i,m}^{o}, \qquad \forall o=1,\dots,O,\ m=1,\dots,M_o,
\label{c:q_i_y_i_eps} 
\\
&y_{i,m}^{o} \in \{0,1\}, \qquad \forall o=1,\dots,O,\ m=1,\dots,M_o.
\label{c:ybinary}
\end{alignat}
}
Here, $q_{i,m}^{o}$ is a continuous decision variable representing the secure caching quality selected by $ed_i$ for content $ct^o_m$ of $cp_o$, while $y_{i,m}^{o} \in \{0,1\}$ is a binary variable indicating whether $ed_i$ caches $ct^o_m$ of $cp_o$ ($y_{i,m}^{o}=1$) or not ($y_{i,m}^{o}=0$). 
The parameter $\delta_{i,m}^{o}$ denotes the storage size required to cache content $ct^o_m$ of $cp_o$ at $ed_i$, and constraint~\eqref{c:storageLimit} ensures the total cached content does not exceed $ed_i$ capacity $D_i$. Constraint~\eqref{c:q_i_y_i} further ensures that positive values of $q_{i,m}^{o}$ occur only if the content is cached, with a small positive constant $\epsilon$ guaranteeing that caching provides a quality greater than zero.

\vspace{-1mm}
\section{Game-Based Optimal Strategy Analysis}\label{sec:Stackelberg_game}

In this section, we analyze the proposed game-theoretic framework under two distinct storage regimes that arise in practical edge caching systems. We first consider a light storage constraint regime in Section \ref{sec:light_constraint}, where ED capacity is sufficiently provisioned such that storage constraints are non-binding. This regime enables a rigorous game-theoretic characterization and decentralized convergence guarantees. We then examine a strict storage constraint regime in Section \ref{sec:strict_constraint}, in which ED capacity becomes binding due to high demand or limited resources, leading to coupled CP decisions and the loss of potential-game structure.


\subsection{Optimal Strategy under Light Storage Constraints}
\label{sec:light_constraint}

In this subsection, we assume that ED storage capacity is sufficient such that constraints \eqref{c:storageLimit} are inactive and do not restrict caching decisions.
In the considered system, each \gls{cp} chooses payments $g_{i,m}^o$ for caching $ct_m^o\in \mathcal{M}_o$ at available \glspl{ed}, aiming to maximize its net utility under content-level budgets. Each \gls{ed}, in turn, selects the secure-caching quality $q_{i,m}^o$ for cached content items to maximize its own profit subject to storage limits. The resulting interaction is hierarchical and competitive: \glspl{cp} act as leaders that announce payment strategies, \glspl{ed} act as followers that respond with quality levels, and \glspl{cp} compete among themselves for the limited caching capacity of \glspl{ed}.

To capture these interactions, we adopt a multi-leader-multi-follower game-theoretic framework. 
The interaction between each \gls{cp} (as leader) and the \glspl{ed} (as followers) is captured by a single-leader-multi-follower Stackelberg game serving as the condition that determines the \glspl{ed}’ responses. 
The model captures both the leader–follower dynamics of \gls{cp}–\gls{ed} interactions and the competitive behavior of multiple \glspl{cp}. 
We show that under a light storage constraint, this CP competition constitutes an exact potential game, which guarantees both the existence of a pure-strategy Nash equilibrium (NE) and the convergence of a decentralized iterative algorithm (\autoref{alg:BestResponse}) to such equilibria.

\textbf{Optimal response of EDs}\label{sec:ed_response}:
Given a payment $g_{i,m}^o$ offered by the $cp_o$, the $ed_i$  solves its local profit problem as defined in~\eqref{eq:ed_problem}--\eqref{c:ybinary}. 
As in \cite{Xu2020} and  \cite{singleProvider}, the optimal strategy and best response of $ed_i$ on $ct^o_m$ is:
\begin{equation} \label{eq:best_reply}
q_{i,m}^{o*} = 
\begin{cases} 
1, & \text{if } g_{i,m}^o \geq 2{c_{i}}\nu/\theta,
\\[2mm]
\dfrac {g_{i,m}^o \theta }{{2{c_{i}}\nu }},
& \text{if } 0 \leq {g_{i,m}^o} \leq 
2 c_{i} \nu / \theta.
\end{cases}
\end{equation} 

Thus, an \gls{ed} provides full quality when the offered payment exceeds its saturation payment ($2 c_i \nu/\theta$), a proportional quality otherwise, and no service if no payment is received.

\textbf{Optimal response of CPs}
\label{sec:cp_response}:
With the \glspl{ed}' best-response $q_{i,m}^{o*}$ obtained by solving~\eqref{eq:ed_problem}--\eqref{c:ybinary}, we now focus on the interaction among \glspl{cp}. 
Each \gls{cp} determines its payment strategy $\mathbf{g}^o$ to maximize its own utility.
By substituting the EDs’ best responses into the CP utilities, the interaction among CPs can be analyzed independently at the leader level.
Formally, this game can be described as follows:

\begin{definition}\label{def:CP_game}
The strategic game between \glspl{cp} is the triplet $\mathcal{G}_{\text{CP}} = \big(\mathcal{O}, \{ S_o \}_{cp_o \in \mathcal{O}}, \{ U_o \}_{cp_o \in \mathcal{O}} \big)$, where:
\begin{itemize}
    \item $\mathcal{O}$ is the set of all \glspl{cp}.
    
    \item $S_o$ is the strategy set of $cp_o$. For each $cp_o \in \mathcal{O}$, the strategy is its payment vector $\mathbf{g}^o$, where $\sum_{i=1}^I g_{i,m}^o \le G_m^o, \ \forall m \in M_o$ and $G_{m}^{o,L} \leq g_{i,m}^o \leq G_m^{o,U}, 
    \
	\forall i \in \mathcal{I}$
    \item $U_o$ is the $cp_o$'s utility function as in~\eqref{eq:cp_utility}.
\end{itemize}
\end{definition}


In the following, we show that the game $\mathcal{G}_{\text{CP}}$ constitutes an \emph{exact potential game}, which guarantees the existence of a pure-strategy Nash equilibrium and enables decentralized computation via best-response dynamics. 
Specifically, $\mathcal{G}_{\text{CP}}$ is an \emph{exact potential game} if there exists a potential function $P: \prod_{o=1}^O S_o \rightarrow \mathbb{R}$ such that, for each $cp_o \in \mathcal{O}$, any strategy profile 
 $(\mathbf{g}^o,\mathbf{g}^{-o})$, and any unilateral deviation 
 $\tilde{\mathbf{g}}^o \in S_o$, the following holds:
 \begin{equation*}
 U_o(\tilde{\mathbf{g}}^o,\mathbf{g}^{-o}) - U_o(\mathbf{g}^o,\mathbf{g}^{-o})
 \;=\; P(\tilde{\mathbf{g}}^o,\mathbf{g}^{-o}) - P(\mathbf{g}^o,\mathbf{g}^{-o}).    
 \end{equation*}

\begin{theorem}
\label{theorem:potential}
The game $\mathcal{G}_{\text{CP}}$ is an exact potential game with potential function
{\scriptsize
\begin{align}
\label{eq:potential_func}
\begin{array}{r}
\displaystyle 
P(\mathbf{g}) = 
\sum\limits_{o=1}^O \sum\limits_{i=1}^I \sum\limits_{m=1}^{M_o} 
\left[ 
\alpha {r_{i,m}^o}{N_{i}(t)}{f_{m}^o}{p_{m}^o} \log\left( 1 + \frac{g_{i,m}^{o} \theta}{2 c_i \nu}\right) 
- \frac{(g_{i,m}^{o})^2 \theta^2}{2 c_i \nu} \right].
\nonumber
\end{array}
\end{align}
}
\end{theorem}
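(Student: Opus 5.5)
The plan is to show that, once the EDs' best responses are substituted, each CP's utility depends only on its own payment matrix. The potential $P$ is then just the sum of all CPs' reduced utilities, and the exact-potential identity follows because every term belonging to $cp_{o'}\neq cp_o$ cancels under a unilateral deviation of $cp_o$.

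\textbf{Step 1 (decoupling of the ED problem).} Under the light storage regime, constraint~\eqref{c:storageLimit} is inactive. The objective~\eqref{eq:ed_problem} and the remaining constraints~\eqref{c:q_i_y_i}--\eqref{c:ybinary} are then separable across triples $(o,i,m)$. Hence $ed_i$ solves an independent scalar problem in $q_{i,m}^o$ for each content, whose solution is~\eqref{eq:best_reply} and depends only on $g_{i,m}^o$.

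\textbf{Step 2 (the integrity indicator).} I would invoke the observation in Section~\ref{ss:ed_problem} that cheating is strictly dominated, since $u_{i,m}^o(0,-1)=-\psi_i<u_{i,m}^o(0,0)$. So $q_{i,m}^{o*}\in[0,1]$ and $x_{i,m}^o=1$, and the $\varsigma$-term in~\eqref{eq:cp_utility} vanishes.

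\textbf{Step 3 (the reduced utility).} In the unsaturated branch $0\le g_{i,m}^o\le 2c_i\nu/\theta$, I would substitute $q_{i,m}^{o*}=g_{i,m}^o\theta/(2c_i\nu)$ into~\eqref{eq:cp_utility}. The payment term becomes
\[
g_{i,m}^o\theta\cdot \frac{g_{i,m}^o\theta}{2c_i\nu}=\frac{(g_{i,m}^o)^2\theta^2}{2c_i\nu},
\]
so the reduced utility is
\[
\tilde U_o(\mathbf g^o)=\sum_{i=1}^I\sum_{m=1}^{M_o}\Big[\alpha r_{i,m}^oN_i(t)f_m^op_m^o\log\Big(1+\frac{g_{i,m}^o\theta}{2c_i\nu}\Big)-\frac{(g_{i,m}^o)^2\theta^2}{2c_i\nu}\Big].
\]
This is exactly the $o$-th summand of $P$. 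It depends on $\mathbf g^o$ alone, and the strategy set $S_o$ of Definition~\ref{def:CP_game} likewise involves only $cp_o$'s own variables.

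\textbf{Step 4 (the exact-potential identity).} Since $P(\mathbf g)=\sum_{o'}\tilde U_{o'}(\mathbf g^{o'})$, a unilateral deviation $\mathbf g^o\to\tilde{\mathbf g}^o$ leaves every summand with $o'\neq o$ unchanged. Therefore
\[
P(\tilde{\mathbf g}^o,\mathbf g^{-o})-P(\mathbf g^o,\mathbf g^{-o})=\tilde U_o(\tilde{\mathbf g}^o)-\tilde U_o(\mathbf g^o),
\]
which is the required identity.

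\textbf{Main obstacle: the saturated branch.} When $g_{i,m}^o\ge 2c_i\nu/\theta$, the ED plays $q=1$ and the true reduced term is $\alpha(\cdot)\log 2-g_{i,m}^o\theta$, which does not match the displayed formula. I would handle this first by arguing that such payments are never chosen. Beyond saturation, the CP's term is strictly decreasing in $g_{i,m}^o$, so one can restrict to (or assume) $G_m^{o,U}\le 2c_i\nu/\theta$ and remain on the unsaturated branch, where the formula is exact. If that restriction is not acceptable, the fallback is to replace each summand of $P$ by the piecewise reduced term. The separability argument of Step 4 is unaffected, so the potential property still holds with this modified $P$.
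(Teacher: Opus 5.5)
Your proposal is correct and follows the paper's own route: the appendix proof likewise writes $U_o$ as the reduced sum $\sum_{i,m}\bigl[\alpha r^o_{i,m}N_i(t)f^o_m p^o_m\log(1+g^o_{i,m}\theta/(2c_i\nu))-(g^o_{i,m})^2\theta^2/(2c_i\nu)\bigr]$, which depends only on $\mathbf g^o$, and then cancels the $k\neq o$ blocks of $P$ under a unilateral deviation. Your Steps 1--2 and your saturated-branch discussion make explicit what the paper leaves implicit: it uses $q^{o*}_{i,m}=g^o_{i,m}\theta/(2c_i\nu)$ for every feasible payment, even though $G^{o,U}_m$ can exceed $2c_i\nu/\theta$ in its own parameter ranges. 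Your remedy is sound, but only if you impose the per-ED cap $g^o_{i,m}\le\min\{G^{o,U}_m,2c_i\nu/\theta\}$ as an actual restriction on $S_o$ (the potential identity must hold for all profiles, so ``never chosen'' only shows this leaves the equilibria unchanged), or else replace $P$ by your piecewise version.
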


\begin{proof}
The proof is in \autoref{sec:Appendix}.
\end{proof}

Since the utility function of each CP is independent of the other CPs' strategies, Nash equilibria coincides with the optimal strategies of the potential function. Moreover, as the potential function $P$ is strictly concave, the considered game admits a unique Nash equilibrium.

\begin{algorithm}
		\caption{Decentralized Payment Update (best reply to $q_i$)}
		\label{alg:BestResponse}
		{\footnotesize
			\begin{algorithmic}[1]
				\State{\textbf{Input:} $\mathcal{O}= \{cp_1, .., cp_O\}$, $ED_{\mathcal{I}}= \{ed_1, .., ed_I\}$, budget limits $G_m^o$} 
				\State{\textbf{Output:} Equilibrium payment strategies $g^* = (g_o)_{o=1}^O$, optimal budget allocations}
				\State{\textbf{Step 1: CP Proposes initial Payments to EDs}}
				\For{$cp_o \in \mathcal{O}$}
				\State{Compute initial budget allocation across EDs:}
				\[
				g_{i,m}^{o} = \frac{G_m^o}{I} \quad \forall ed_i \in ED_{\mathcal{I}}
				\]
				\State{Propose payments $g_{i,m}^{o}$ to each ED (initial)}
				\EndFor\State{\textbf{Step 2: CP Payment Proposal and ED Response}}
				\While{At least one $cp_o$ has an incentive to update}
				\State{\textbf{Step 2.1: ED Response to \glspl{cp}}}
				\For{$ed_i \in ED_{\mathcal{I}}$}
				\State{$ed_i$ received payment $g_{i,m}^{o}$ from $cp_o$ for each CP and provides the response as secure caching quality $q_{i,m}^o$}
				\EndFor
				\State{\textbf{Step 2.2: CP Payoff Calculation and Budget Adjustment}}
				\For{$cp_o \in CP_{\mathcal{O}}$}
				\State{Compute payoff $U_o$ based on EDs response $q_{i,m}^o$ and $g_{i,m}^{o}$}
				\State{Update budget allocation based on $q_{i}^o$ (ES Response):}
				\[ \textbf{g}_{m}^{o}= \arg \max_{g_m^o} U_o \quad \text{s.t.} \quad \sum_{i} g_{i,m}^o \leq G_m^o \]
				\EndFor
				
				\If{No CP improves its payoff}
				\State{\textbf{Convergence reached, stop updating.}}
				\EndIf
				\EndWhile
				
				\State{\textbf{return} Equilibrium payments $g^*$ and optimal budget allocation}
			\end{algorithmic}
		}
	\end{algorithm}

Building on this property, \autoref{alg:BestResponse} outlines a decentralized iterative procedure that enables each \gls{cp} to converge toward its equilibrium payment strategy. The algorithm begins by initializing the system with the set of \glspl{cp}, \glspl{ed}, and the budget constraints $G_m^o$ (Lines~1-2). Each \gls{cp} then initializes its payment strategy $g_{i,m}^o$ for each $ed_i$ and content $ct^o_m$, typically by distributing its budget $G_m^o$ equally among the \glspl{ed}. These initial proposals are sent to the \glspl{ed} as the first round of interactions (Lines~3-7). In the main iterative phase (Lines~8 onward), each \gls{ed} computes its best-response caching quality $q_{i,m}^o$ for the received payments using the closed-form expression in~\eqref{eq:best_reply} (Lines~10–13). Given these responses, each \gls{cp} recomputes its utility and updates its budget allocation by solving its local optimization problem subject to budget constraints (Lines~14–18). This represents a best-response update in the strategy space of the \glspl{cp}. Finally, a convergence check is performed (Lines~19–21): if no \gls{cp} can further improve its utility, the procedure terminates and the current profile is returned as the equilibrium solution (Line~22). 
Owing to the potential game structure at the leader level, this decentralized best-response process converges to the pure NE strategy profile.

\subsection{Strategy Dynamics under Strict Storage Constraints} \label{sec:strict_constraint}

When EDs storage constraints~\eqref{c:storageLimit} are strictly binding, CPs decisions become coupled through shared capacity limits, 
and the potential game structure established in Section \ref{sec:light_constraint} no longer applies. 
As a result, closed-form equilibrium characterization and convergence proofs become analytically intractable.


Nevertheless, strict storage constraints are of significant practical interest, as real-world edge systems frequently operate under high demand or limited storage availability. Accordingly, rather than enforcing restrictive assumptions to recover theoretical guarantees, we investigate this regime through extensive simulation by relying on \autoref{alg:BestResponse}. As demonstrated in Section 6, the proposed decentralized protocol consistently converges to stable strategy profiles across a wide range of system configurations, indicating robust empirical behavior beyond the analytically tractable regime.


\vspace{-1mm}
\section{Performance Evaluation}\label{sec:experimental}
This section presents numerical analysis results evaluating the proposed decentralized edge caching framework performance. We first outline the experimental setup in Section \ref{sec:exp_setup}, then analyze numerical results across key performance indicators including entity utilities, convergence behavior, and execution time with respect to system parameters, in Section \ref{sec:exp_Numerical_Results}. 
Following the theoretical analysis in~\autoref{sec:Stackelberg_game}, we conduct experiments under two distinct storage regimes: light storage constraints and strict storage constraints. All analyses were conducted on a MacBook Air with an M1 chip, featuring an 8-core CPU at 3.2 GHz and 16 GB of RAM, using BARON 24 for optimization tasks with default settings.

\subsection{Experimental Setup}\label{sec:exp_setup}
\begin{table}[bp]
    \centering
    \small
    \caption{Simulation Parameters}
    \label{tab:sim_params}
    		\scalebox{0.8}{
    \begin{tabular}{l  l | l  l}
    \toprule
    \textbf{Parameter} & \textbf{Value/Range} & \textbf{Parameter} & \textbf{Value/Range} \\ 
    \midrule
        $O$ & 1 to 5 & $\alpha$ & 20 \\ 
        $I$ & 5 to 50 & $\theta$ &  $[0.8, 1.2]$\\
        $M_o$ & 5 to 25 & $\nu$ &  $[0.8, 1.2]$\\ 
        $c_i$ & $[0.5, 1.5]$ & $G_m^{o,L}$ & 0 \\ 
        $N_i(t)$ &  $[50, 100]$ & $G_m^{o,U}$ & $[0.5,5]$ \\ 
        $f_m^{o}$ Distribution & Zipf with $\gamma = 1.2$ & $G_m^{o}$ & $[2, 50]$\\ 
        $p_m^o$ Distribution & Zipf with $\beta = 1.2$ & $\delta_m^o$ & $[5,15]$\\ 
        
    \bottomrule
    \end{tabular}
    }
\end{table}

The game model instances have been generated by adopting the parameters listed in \autoref{tab:sim_params}, based on the values adopted in~\cite{Xu2020}. We consider networks with $1$ to $5$ \glspl{cp} and $5$ to $50$ \glspl{ed}, where the number of \glspl{ed} is varied in steps of $5$. An \gls{ed}'s cost parameter, $c_i$, is drawn uniformly from the interval $[0.5,1.5]$ reflecting varying operational costs across
EDs, to capture heterogeneity in deployment expenses across \glspl{ed}. The number of \glspl{mu} associated with each \gls{ed} is sampled uniformly between $50$ and $100$. Each $cp_o$ has between 5 and 25 contents, with content sizes $\delta_m^o$ drawn uniformly from $[5, 15]$~MB. Both content popularity and content importance are modeled using Zipf distributions with skewness parameters specified in \autoref{tab:sim_params}.
For each $ed_i$, we consider $D_i \in [1.5, 2]$~GB for light storage, and $D_i \in [150, 200]$~MB for strict storage, modeling resource-constrained edge environments. For every fixed configuration, $10$ independent random instances are generated, and all reported results are averaged across these instances to ensure statistical robustness.

\subsection{Experimental results}\label{sec:exp_Numerical_Results}

This section presents numerical results evaluating the proposed algorithm. We first analyze its convergence behavior and scalability under different network sizes and storage constraints (see Section \ref{sec:Convergence_Scalability}). We then study the impact of the \gls{ed} cost parameter on the utilities of \glspl{cp} and \glspl{ed}, highlighting the effects of pricing and storage limitations on system performance (see Section \ref{sec:cost_parameter_impact}).

\subsubsection{Convergence and Scalability} \label{sec:Convergence_Scalability}

\begin{figure}[tbp]
  \captionsetup[subfigure]{justification=centering} 
  \begin{subfigure}[b]{0.22\textwidth}
    \includegraphics[width=\linewidth]{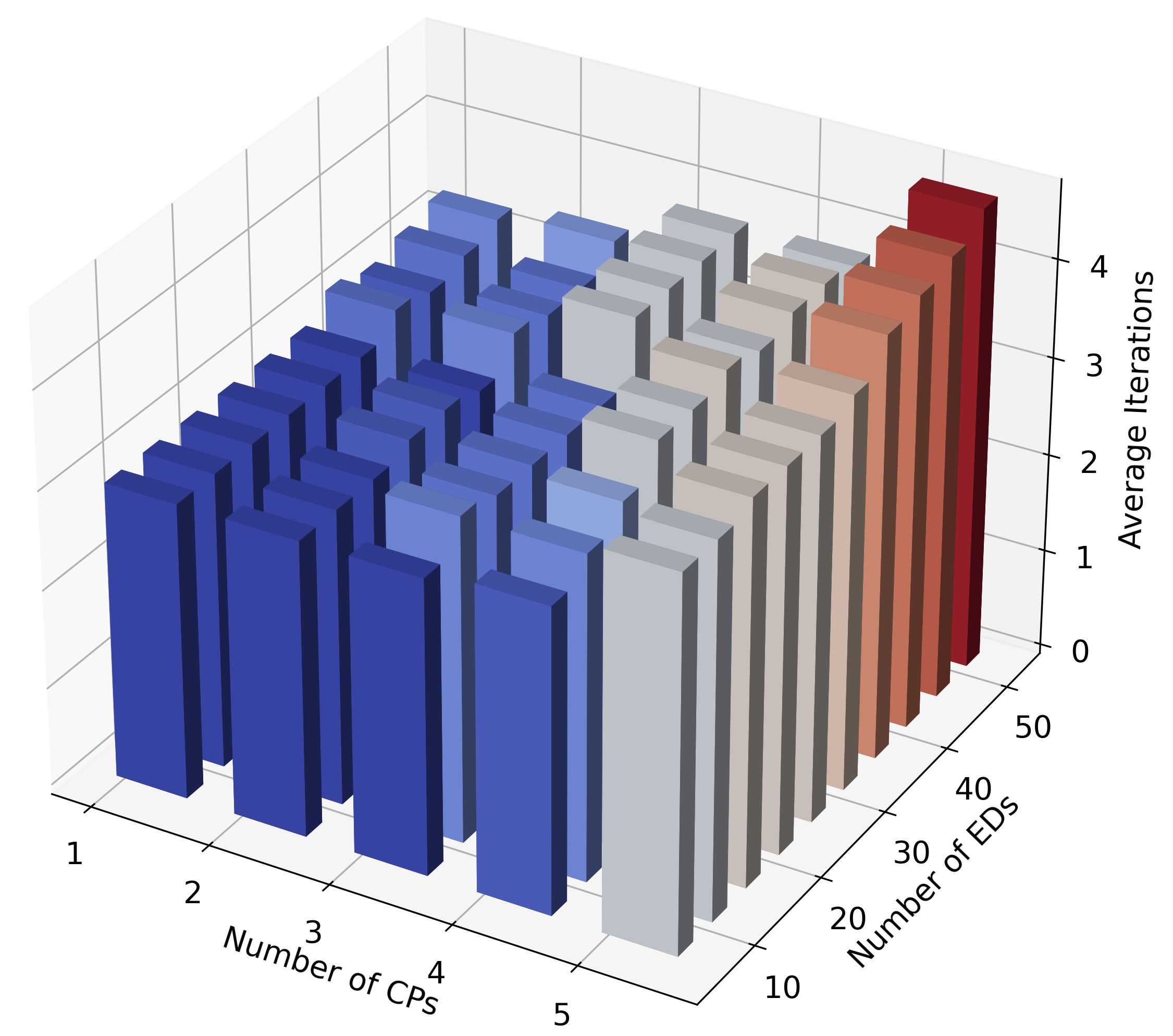}
    \caption{ Iterations.}
    \label{plot:light_iteration}
  \end{subfigure}
  \begin{subfigure}[b]{0.22\textwidth}
    \includegraphics[width=\linewidth]{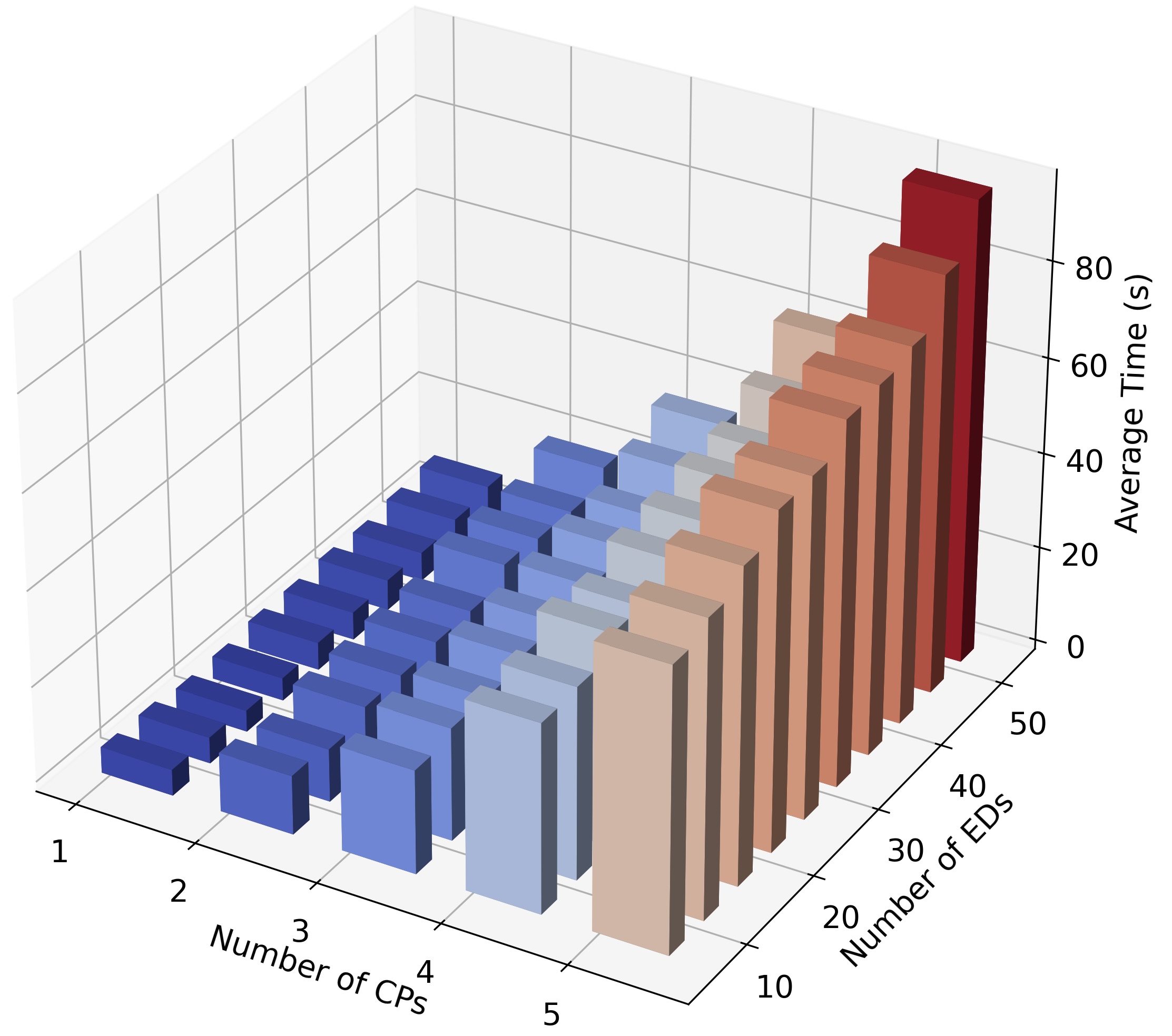}
    \caption{Execution time.}
    \label{plot:light_time}
  \end{subfigure}\hfill

  \caption{Average iterations and execution time to converge vs. number of \glspl{cp} and \glspl{ed}, under light storage constraint.}
  \label{plot:Ave_It_Exec_time_light}
\end{figure}

\begin{figure}[tbp]
  \captionsetup[subfigure]{justification=centering} 
  \begin{subfigure}[b]{0.22\textwidth}
    \includegraphics[width=\linewidth]{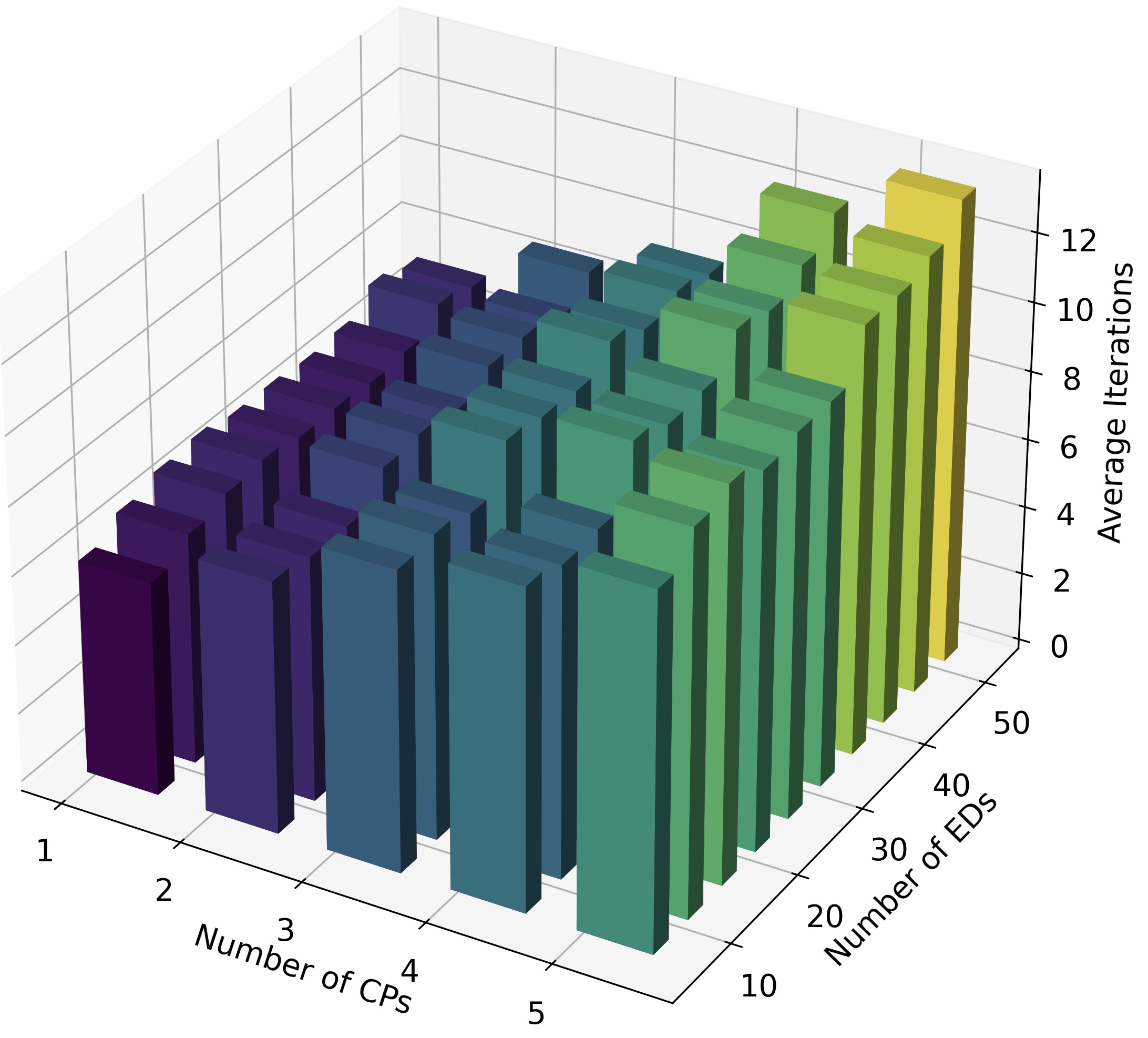}
    \caption{ Iterations.}
    \label{plot:strict_iteration}
  \end{subfigure}
  \begin{subfigure}[b]{0.22\textwidth}
    \includegraphics[width=\linewidth]{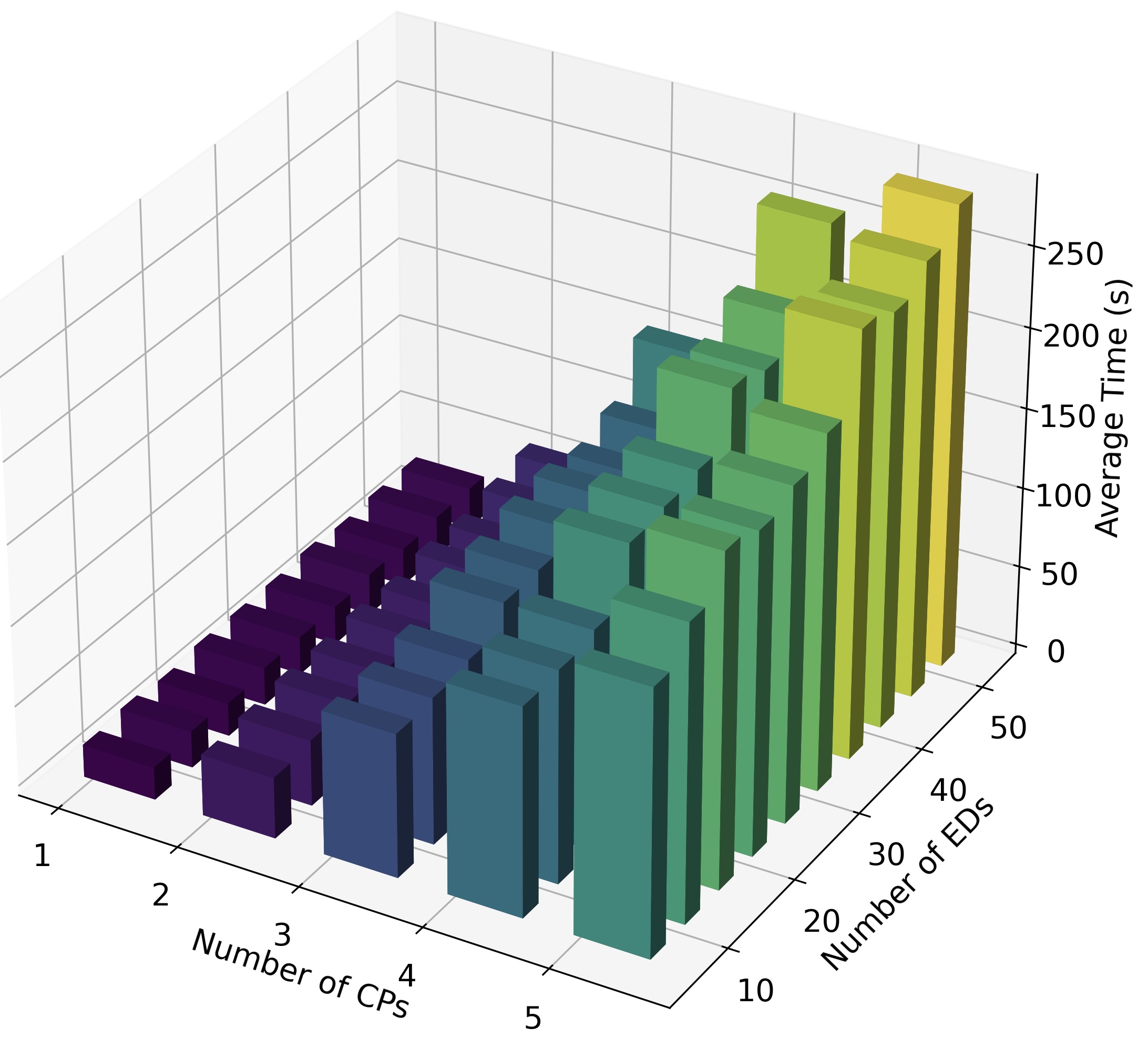}
    \caption{Execution time.}
    \label{plot:strict_time}
  \end{subfigure}\hfill

  \caption{Average iterations and execution time to converge vs. number of \glspl{cp} and \glspl{ed}, under strict storage constraint.}
  \label{plot:Ave_It_Exec_time_strict}
\end{figure}

Figures~\ref{plot:light_iteration}--\ref{plot:strict_time} present the convergence behavior of \autoref{alg:BestResponse}, under both light and strict storage constraints across varying network scales. Figures~\ref{plot:light_iteration} and \ref{plot:light_time} show the average number of iterations and execution time required to reach convergence under light storage constraints, and Figures~\ref{plot:strict_iteration} and \ref{plot:strict_time} show the average number of iterations and execution time required to reach convergence under strict storage constraints. Under light storage ($D_i \in [1.5, 2]$~GB), \glspl{ed} can cache the entire contents from all \glspl{cp} (5-125 content per), while under strict storage ($D_i \in [150, 200]$~MB), \glspl{ed} can only store 10-40 contents, forcing selective caching of approximately 1-3 contents per \gls{cp} depending on content sizes.

Under light storage constraints, the algorithm demonstrates rapid convergence across all configurations. For small-scale networks (1-2 \glspl{cp}, 5-20 \glspl{ed}), convergence is achieved within 3 iterations, taking approximately 5-20 seconds. As the network scales to larger configurations (4-5 \glspl{cp}, 40-50 \glspl{ed}), the iteration increases moderately to 4 iterations, with execution times reaching 80 seconds. The relatively flat iteration profile across varying numbers of \glspl{ed} (for fixed number of \glspl{cp}) indicates that the algorithm's convergence is primarily influenced by the number of \glspl{cp} rather than \glspl{ed}, which aligns with the theoretical analysis showing that coordination complexity grows with the number of participating content providers.
Even with a $5\times$ increase in CPs and $10\times$ increase in EDs, iterations increase only $2\times$, demonstrating robust convergence.

Under strict storage, convergence requires more iterations, especially in large networks. For 1--2 CPs, 7--9 iterations (20--50~s) are needed, roughly double that of light storage. For 5 CPs with 50 EDs, convergence takes 12--13 iterations (260--280~s), reflecting the overhead from frequent invocation of the importance-weighted fallback mechanism due to limited caching. Across both storage regimes, iterations are more sensitive to the number of CPs than EDs: increasing CPs from 1 to 5 raises iterations by 40--50\% (light) and 80--90\% (strict), while increasing EDs from 5 to 50 increases iterations by only 10--20\%. 
Even under strict constraints, all scenarios converge within 5~minutes solving all problems sequentially.  In practice, CPs and EDs problems can be solved in parallel in distributed settings. 
For the largest scale, the solution of the CP problem takes on average 10 s under strict constraints (5 s under light constraints), while the solution of the ED problem takes 5 s (2 s under light constraints), validating the practical feasibility of our approach. These results further validate the practical feasibility of our approach in distributed real-world settings.

\subsubsection{Cost parameter impact on CPs and EDs Utilities.}\label{sec:cost_parameter_impact}

In this analysis, we examine the impact of the cost parameter $c_i$ of an \gls{ed} on its utility across different numbers of \glspl{cp} and the utilities of \glspl{cp} under both light and strict storage constraints. These metrics are evaluated as $c_i$ varies from 0.5 to 1.5, with 5 \glspl{cp} each offering 5 content items, providing a comprehensive understanding of the influence of cost and storage constraints on the system.

As shown in Figure~\ref{plot:c_i_impact_on_cp}, CP utilities decrease monotonically with increasing 
$c_i$ under both storage regimes, reflecting the higher payments required to incentivize EDs as their operational costs rise. 
Under light storage constraints (\autoref{plot:cp_light_c_i}), all CPs maintain positive utilities across the entire cost range, though substantial performance differences emerge due to heterogeneous content popularity, importance, and budget levels. In this regime, abundant storage allows multiple CPs to coexist profitably despite cost increases.

Under strict storage constraints (\autoref{plot:cp_strict_c_i}), CP utilities are significantly reduced, with declines ranging from approximately 50\% to nearly 100\% compared to the light-storage case. Limited storage forces EDs to cache only a small subset of content, intensifying competition among CPs and disproportionately disadvantaging those with lower effective valuations. As a result, some CPs obtain near-zero utility across all values of $c_i$, revealing a scarcity-driven “winner-takes-most” outcome.


\begin{figure}[htbp]
  \captionsetup[subfigure]{justification=centering} 
  \begin{subfigure}[b]{0.24\textwidth}
    \includegraphics[width=\linewidth]{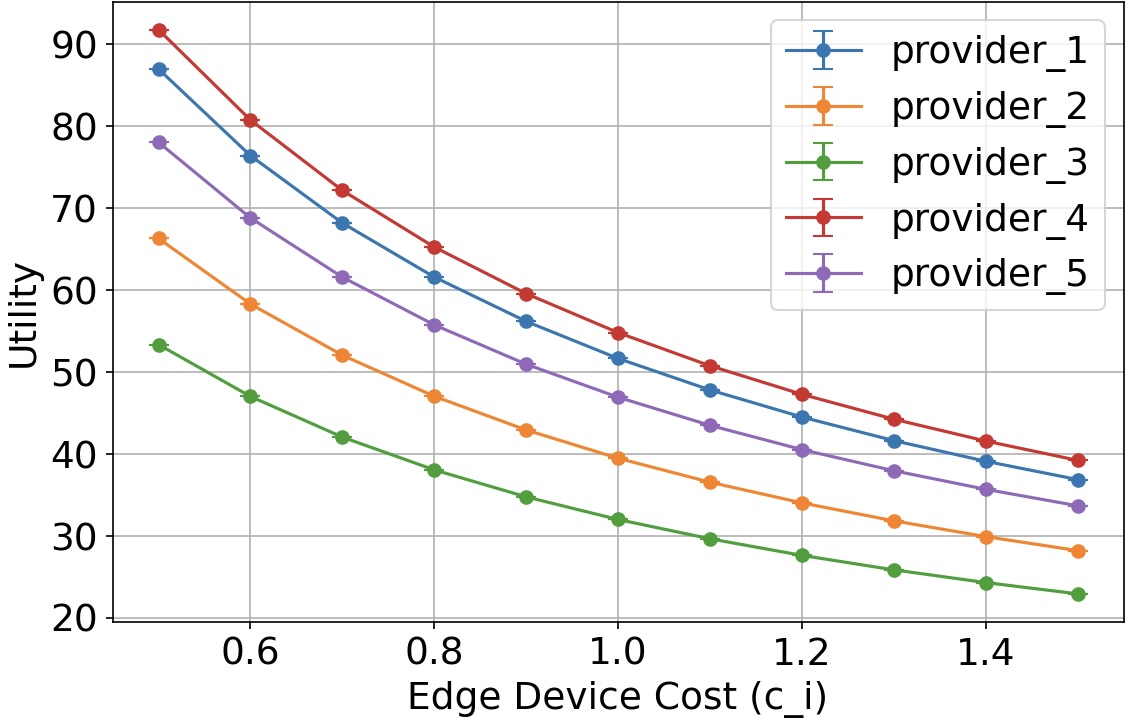}
    \caption{Light storage constraint. }
    \label{plot:cp_light_c_i}
  \end{subfigure}
  \begin{subfigure}[b]{0.24\textwidth}
    \includegraphics[width=\linewidth]{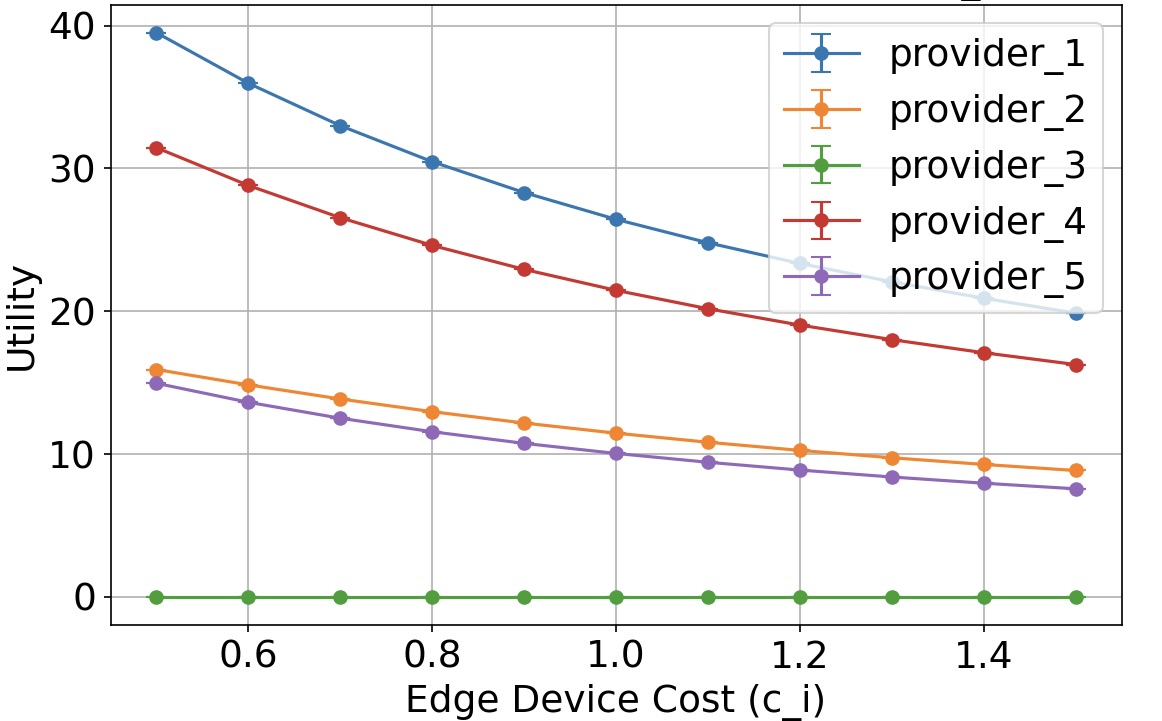}
    \caption{Strict storage constraint.}
    \label{plot:cp_strict_c_i}
  \end{subfigure}\hfill

  \caption{Average \gls{cp} utilities vs. \gls{ed} cost parameter $c_i$.}
  \label{plot:c_i_impact_on_cp}
\end{figure}

\autoref{plot:c_i_impact_on_ed}  illustrates the corresponding impact on ED utilities. While ED utility also decreases as $c_i$ increases, it rises markedly with the number of competing CPs under both storage regimes. Under light storage (\autoref{plot:ed_light_c_i}), EDs benefit substantially from increased CP competition, achieving up to a threefold utility increase when the number of CPs grows from one to five. Under strict storage (\autoref{plot:ed_strict_c_i}), ED utilities are lower overall due to limited caching capacity, but remain less severely impacted than CP utilities, declining by approximately 60–70\%.
Overall, these results indicate that storage scarcity shifts economic surplus away from CPs—particularly weaker ones—and toward EDs, which retain relative bargaining power by controlling access to limited caching resources. While increased CP competition benefits EDs in all cases, strict storage constraints exacerbate inequality among CPs and reduce overall market inclusiveness.

\begin{figure}[htbp]
  \captionsetup[subfigure]{justification=centering} 
  \begin{subfigure}[b]{0.22\textwidth}
    \includegraphics[width=\linewidth]{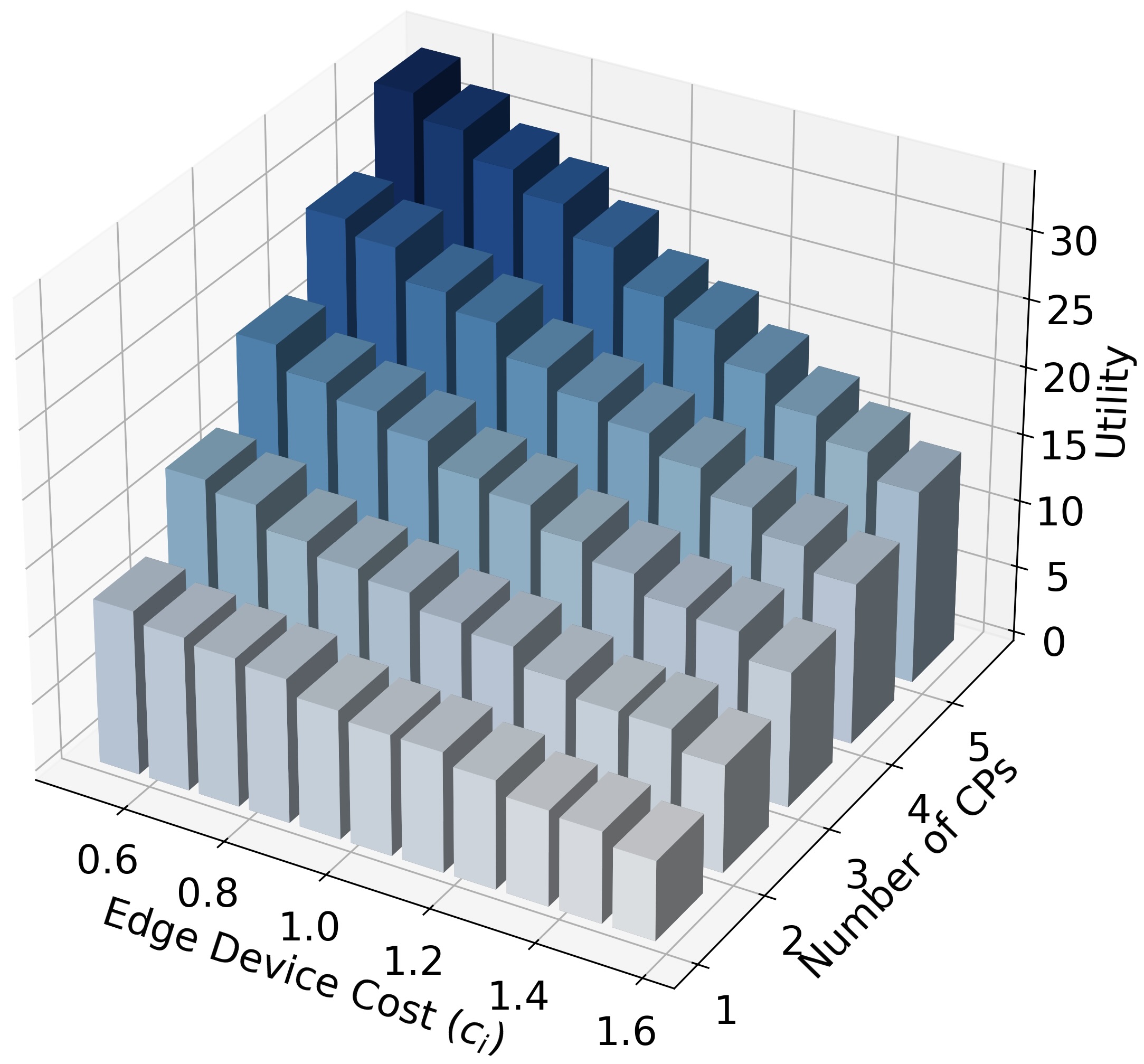}
    \caption{ Light storage constraint.}
    \label{plot:ed_light_c_i}
  \end{subfigure}
  \begin{subfigure}[b]{0.22\textwidth}
    \includegraphics[width=\linewidth]{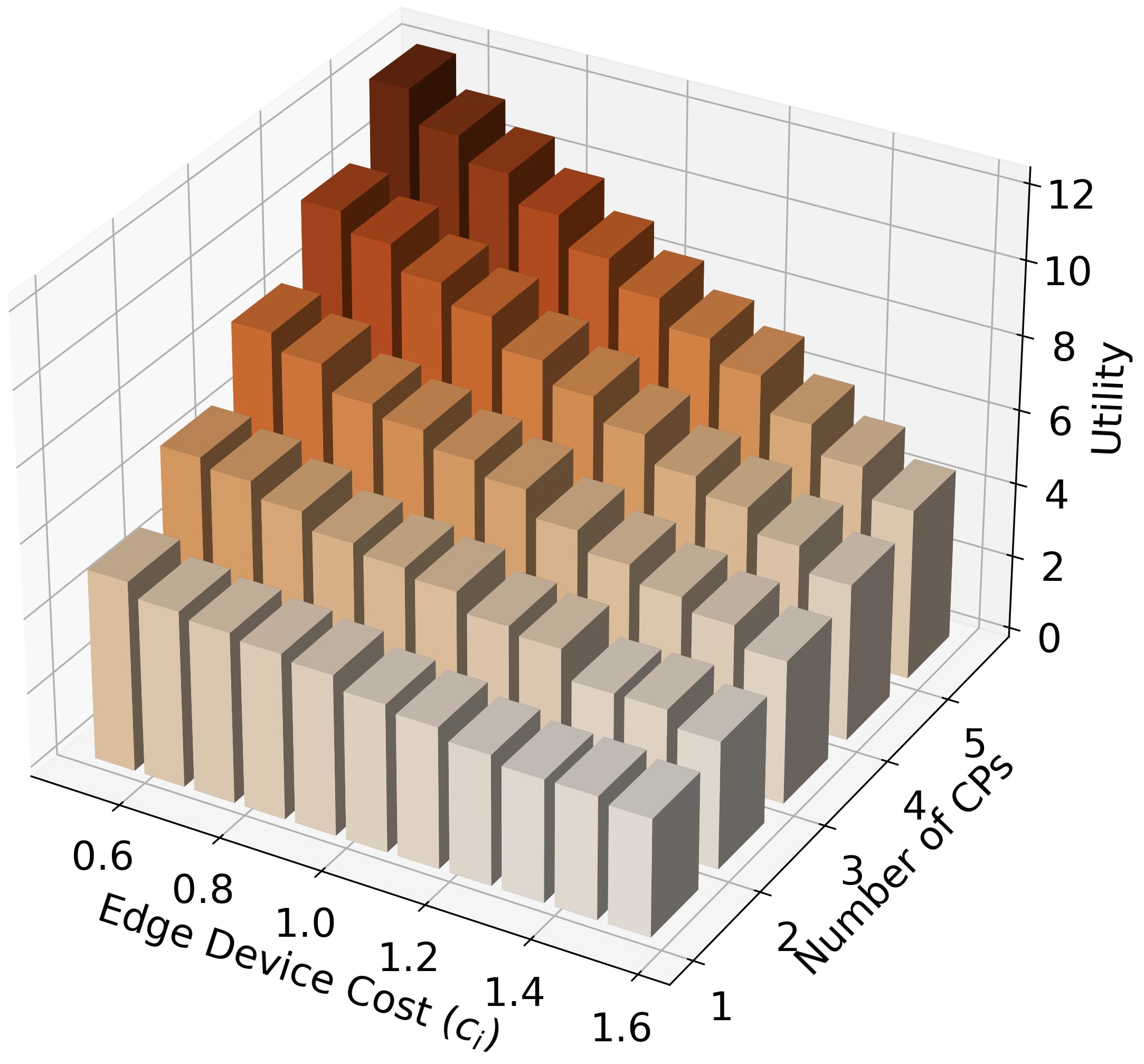}
    \caption{Strict storage constraint.}
    \label{plot:ed_strict_c_i}
  \end{subfigure}\hfill

  \caption{Average \gls{ed} utility vs. \gls{ed} cost parameter $c_i$ for different numbers of \glspl{cp}. 
  }
  \label{plot:c_i_impact_on_ed}
\end{figure}

\vspace{-1mm}
\section{Conclusions}\label{sec:conclusions}

This paper proposed a decentralized game-theoretic framework for multi-content-provider edge caching in mobile social networks, explicitly accounting for CP budget constraints, ED storage limitations, and strategic competition. By modeling CP–ED interactions as a Stackelberg game and CP competition as a non-cooperative game, we characterized equilibrium behavior and decentralized convergence under realistic system assumptions.
Under light storage constraints, CP competition was shown to form an exact potential game, guaranteeing equilibrium existence and uniqueness, and convergence. When storage constraints become binding, analytical guarantees no longer hold; however, extensive simulations demonstrated stable convergence across a wide range of network scales. In both regimes, convergence complexity was primarily driven by the number of competing CPs rather than the number of EDs, even when solver-based optimization was used.



\printbibliography

@article{luo2025cost,
  title={Cost-Effective Edge Data Caching With Failure Tolerance and Popularity Awareness},
  author={Luo, Ruikun and Zhang, Zujia and He, Qiang and Xu, Mengxi and Chen, Feifei and Dai, Xiaohai and Wu, Song and Jin, Hai},
  journal={IEEE Trans. on Mobile Computing},
  year={2025},
  publisher={IEEE}
}

@manual{ericsson,
    key = {report},
    title = {https://www.ericsson.com/en/reports-and-papers/mobility-report \\ /dataforecasts/mobile-traffic-forecast}
}

@article{zhang2025survey,
  title={A survey on privacy-preserving caching at network edge: Classification, solutions, and challenges},
  author={Zhang, Xianzhi and Zhou, Yipeng and Wu, Di and Sheng, Quan Z and Riaz, Shazia and Hu, Miao and Xiao, Linchang},
  journal={ACM Computing Surveys},
  volume={57},
  number={5},
  pages={1--38},
  year={2025},
  publisher={ACM New York, NY}
}

@article{chen2024dynamic,
  title={Dynamic task offloading and resource allocation for NOMA-aided mobile edge computing: An energy efficient design},
  author={Chen, Ying and Xu, Jiajie and Wu, Yuan and Gao, Jie and Zhao, Lian},
  journal={IEEE Trans. on Services Computing},
  volume={17},
  number={4},
  pages={1492--1503},
  year={2024},
  publisher={IEEE}
}

@article{cheng2024stackelberg,
  title={A Stackelberg Game Based Framework for Edge Pricing and Resource Allocation in Mobile Edge Computing},
  author={Cheng, Siyao and Ren, Tian and Zhang, Hao and Huang, Jiayan and Liu, Jie},
  journal={IEEE Internet of Things Journal},
  year={2024},
  publisher={IEEE}
}

@article{Xu2020,
  title={Game theory and reinforcement learning based secure edge caching in mobile social networks},
  author={Xu, Qichao and Su, Zhou and Lu, Rongxing},
  journal={IEEE Trans. on Information Forensics and Security},
  volume={15},
  pages={3415--3429},
  year={2020},
  publisher={IEEE}
}

@article{yu2025auction,
  title={Auction Theory and Game Theory Based Pricing of Edge Computing Resources: A Survey},
  author={Yu, Jiguo and Liu, Shun and Zou, Yifei and Wang, Guijuan and Hu, Chunqiang},
  journal={IEEE Internet of Things Journal},
  year={2025},
  publisher={IEEE}
}

@article{feng2025federated,
  title={Federated Deep Reinforcement Learning for Multimodal Content Caching in Edge-Cloud Networks},
  author={Feng, Weijia and Zuo, Xinyu and Zhang, Ruojia and Zhu, Yichen and Wang, Chenyang and Guo, Jia and Sun, Chuan},
  journal={IEEE Trans. on Network Science and Engineering},
  year={2025},
  publisher={IEEE}
}

@article{zyrianoff2024cache,
  title={Cache-it: A distributed architecture for proactive edge caching in heterogeneous iot scenarios},
  author={Zyrianoff, Ivan and Gigli, Lorenzo and Montori, Federico and Sciullo, Luca and Kamienski, Carlos and Di Felice, Marco},
  journal={Ad Hoc Networks},
  volume={156},
  pages={103413},
  year={2024},
  publisher={Elsevier}
}

@article{zhang2024cache,
  title={How to cache important contents for multi-modal service in dynamic networks: a DRL-based caching scheme},
  author={Zhang, Zhe and St-Hilaire, Marc and Wei, Xin and Dong, Haiwei and El Saddik, Abdulmotaleb},
  journal={IEEE Trans. on Multimedia},
  year={2024},
  publisher={IEEE}
}

@article{yang2018content,
  title={Content popularity prediction towards location-aware mobile edge caching},
  author={Yang, Peng and Zhang, Ning and Zhang, Shan and Yu, Li and Zhang, Junshan and Shen, Xuemin},
  journal={IEEE Trans. on Multimedia},
  volume={21},
  number={4},
  pages={915--929},
  year={2018},
  publisher={IEEE}
}

@article{niknia2025attention,
  title={Attention-Enhanced Prioritized Proximal Policy Optimization for Adaptive Edge Caching},
  author={Niknia, Farnaz and Wang, Ping and Wang, Zixu and Agarwal, Aakash and Rezaei, Adib S},
  journal={IEEE Trans. on Vehicular Technology},
  year={2025},
  publisher={IEEE}
}

@article{khan2024content,
  title={Content caching in mobile edge computing: a survey},
  author={Khan, Yasar and Mustafa, Saad and Ahmad, Raja Wasim and Maqsood, Tahir and Rehman, Faisal and Ali, Javid and Rodrigues, Joel JPC},
  journal={Cluster Computing},
  volume={27},
  number={7},
  pages={8817--8864},
  year={2024},
  publisher={Springer}
}

@article{liao2025context,
  title={Context-aware Proactive Edge Caching for Vehicular Edge Computing Based on Asynchronous Federated Learning},
  author={Liao, Zhuofan and Liu, Pang and Zheng, Bin and Tang, XiaoYong},
  journal={IEEE Internet of Things Journal},
  year={2025},
  publisher={IEEE}
}

@article{chaudhary2025pencache,
  title={PeNCache: Popularity based cooperative caching in Named Data Networks},
  author={Chaudhary, Pankaj and Hubballi, Neminath},
  journal={Computer Networks},
  volume={257},
  pages={110995},
  year={2025},
  publisher={Elsevier}
}

@article{wang2025investment,
  title={Investment-driven budget allocation and dynamic pricing strategies in edge cache network},
  author={Wang, Quyuan and Chen, Pengyang and Liu, Jiadi and Wang, Ying and Guo, Zhiwei},
  journal={Pervasive and Mobile Computing},
  volume={109},
  pages={102040},
  year={2025},
  publisher={Elsevier}
}

@article{yuan2024efficient,
  title={Efficient online computing offloading for budget-constrained cloud-edge collaborative video streaming systems},
  author={Yuan, Shijing and Liu, Yuxin and Guo, Song and Li, Jie and Chen, Hongyang and Wu, Chentao and Yang, Yang},
  journal={IEEE Trans. on Cloud Computing},
  year={2024},
  publisher={IEEE}
}

@article{guo2025optimal,
  title={Optimal Multi-Bitrate Video Caching and Processing in Edge Computing: A Stackelberg Game Approach},
  author={Guo, Min and Zhang, Di and Xing, Weiwei and Shao, Xun and Liu, Zhi and Zhang, Yaoxue},
  journal={IEEE Internet of Things Journal},
  year={2025},
  publisher={IEEE}
}

@article{yan2021pricing,
  title={Pricing-driven service caching and task offloading in mobile edge computing},
  author={Yan, Jia and Bi, Suzhi and Duan, Lingjie and Zhang, Ying-Jun Angela},
  journal={IEEE Trans. on Wireless Communications},
  volume={20},
  number={7},
  pages={4495--4512},
  year={2021},
  publisher={IEEE}
}

@article{doostmohammadi2023dynamic,
  title={Dynamic clustering for low-delay delivery of video content cached in MEC servers},
  author={Doostmohammadi, Ali and Khayyambashi, Mohammad Reza and Movahedinia, Naser and Becvar, Zdenek},
  journal={IEEE Systems Journal},
  volume={17},
  number={4},
  pages={5842--5853},
  year={2023},
  publisher={IEEE}
}

@article{abolhassani2024optimal,
  title={Optimal push and pull-based edge caching for dynamic content},
  author={Abolhassani, Bahman and Tadrous, John and Eryilmaz, Atilla and Y{\"u}ksel, Serdar},
  journal={IEEE/ACM Trans. on Networking},
  volume={32},
  number={4},
  pages={2765--2777},
  year={2024},
  publisher={IEEE}
}

@article{ismail2025survey,
  title={A survey on resource scheduling approaches in multi-access edge computing environment: a deep reinforcement learning study},
  author={Ismail, Ahmed A and Khalifa, Nour Eldeen and El-Khoribi, Reda A},
  journal={Cluster Computing},
  volume={28},
  number={3},
  pages={184},
  year={2025},
  publisher={Springer}
}

@article{fan2024contract,
  title={Contract theory and stackelberg game based storage resource allocation in edge caching systems},
  author={Fan, Yuqi and Zhang, Zhenghui and Hu, Zipeng and Wu, Weili and Du, Dingzhu},
  journal={IEEE Internet of Things Journal},
  year={2024},
  publisher={IEEE}
}

@article{jiang2022game,
  title={A game-theoretic analysis of joint mobile edge caching and peer content sharing},
  author={Jiang, Changkun and Gao, Lin and Luo, Jingjing and Zhou, Pan and Li, Jianqiang},
  journal={IEEE Trans. on Network Science and Engineering},
  volume={10},
  number={3},
  pages={1445--1461},
  year={2022},
  publisher={IEEE}
}

@article{zhong2020deep,
  title={Deep reinforcement learning-based edge caching in wireless networks},
  author={Zhong, Chen and Gursoy, M Cenk and Velipasalar, Senem},
  journal={IEEE Trans. on Cognitive Communications and Networking},
  volume={6},
  number={1},
  pages={48--61},
  year={2020},
  publisher={IEEE}
}

@article{wei2024cooperative,
  title={Cooperative caching algorithm for mobile edge networks based on multi-agent meta reinforcement learning},
  author={Wei, Zhenchun and Zhao, Yang and Lyu, Zengwei and Yuan, Xiaohui and Zhang, Yu and Feng, Lin},
  journal={Computer Networks},
  volume={242},
  pages={110247},
  year={2024},
  publisher={Elsevier}
}

@article{zhang2017computing,
  title={Computing resource allocation in three-tier IoT fog networks: A joint optimization approach combining Stackelberg game and matching},
  author={Zhang, Huaqing and Xiao, Yong and Bu, Shengrong and Niyato, Dusit and Yu, F Richard and Han, Zhu},
  journal={IEEE Internet of Things Journal},
  volume={4},
  number={5},
  pages={1204--1215},
  year={2017},
  publisher={IEEE}
}

@article{liu2017incentive,
  title={Incentive mechanism for computation offloading using edge computing: A Stackelberg game approach},
  author={Liu, Yang and Xu, Changqiao and Zhan, Yufeng and Liu, Zhixin and Guan, Jianfeng and Zhang, Hongke},
  journal={Computer Networks},
  volume={129},
  pages={399--409},
  year={2017},
  publisher={Elsevier}
}

@article{xu2017secure,
  title={Secure content delivery with edge nodes to save caching resources for mobile users in green cities},
  author={Xu, Qichao and Su, Zhou and Zheng, Qinghua and Luo, Minnan and Dong, Bo},
  journal={IEEE Trans. on Industrial Informatics},
  volume={14},
  number={6},
  pages={2550--2559},
  year={2017},
  publisher={IEEE}
}

@article{ma2024survey,
  title={A survey of ddos attack and defense technologies in multi-access edge computing},
  author={Ma, Yong and Liu, Long and Liu, Zhiquan and Li, Fagen and Xie, Qilin and Chen, Kaiwei and Lv, Chenyang and He, Ying and Li, Fan},
  journal={IEEE Internet of Things Journal},
  year={2024},
  publisher={IEEE}
}

@article{he2024design,
  title={Design and implementation of social based edge node selection algorithm},
  author={He, Qinlu and Wang, Rui and Zhang, Fan and Zhang, Xiang and Bian, Genqing and Zhang, Weiqi and Li, Zhen},
  journal={Multimedia Tools and Applications},
  volume={83},
  number={34},
  pages={81127--81149},
  year={2024},
  publisher={Springer}
}

@article{singleProvider,
  title={Secure Budget-Aware Edge Caching in Mobile Social Networks: A Dynamic Optimization Approach},
  author={Seyedi, Zahra and Sedghani, Hamta and Verticale, Giacomo and Passacantando, Mauro and Ardagna, Danilo},
  journal={Available at SSRN 5875412}
}

@article{sedghani2021,
title = {An incentive mechanism based on a Stackelberg game for mobile crowdsensing systems with budget constraint},
journal = {Ad Hoc Networks},
volume = {123},
pages = {102626},
year = {2021},
issn = {1570-8705},
author = {Sedghani, Hamta and Ardagna, Danilo  and Passacantando, Mauro  and Lighvan, Mina Zolfy  and Aghdasi, Hadi S. }
}

\clearpage
\appendix \label{sec:Appendix}

\subsection*{Summary of Notations}
\vspace{0.5cm}
    \begin{table}[H]
	\centering
	\caption{Summary of Notations}
	\label{tab:notations}
	\begin{tabular}{l l}
			\toprule
                \textbf{Notation} & \textbf{Description} \\
            \midrule
			\gls{cp} & Content provider \\
            \gls{ed} & Edge caching device \\
            \gls{mu} & Mobile user \\
            \gls{SCQ} & Secure Caching Quality  \\
            $\mathcal{I}$ & Set of $I$ \glspl{ed}\\
            $ed_i$ & \gls{ed} number $i$, for $i \in \{1, \dots, I\}$\\
            $\mathcal{O}$ & Set of $O$ \glspl{cp} \\
            $cp_o$ & \gls{cp} number $o$, for $o \in \{1, \dots, O\}$\\
			$\mathcal{M}_{o}$ & Set of contents of $cp_o$ \\
            $M_o$ & Number of contents of $cp_o$ \\
            $ct_{m}^o$ & content number $m$, for $m \in \{1, \dots, M_o \}$\\
			$N_i(t)$ & Number of MUs in coverage of $ed_i$ at \\ & time slot $t$\\
            $N_i(m, t)$ & Number of MUs requesting $ct_m$ in \\ & coverage of $ed_i$ at time slot $t$\\
			$f_m^o$ & Popularity of $ct_{m}^o$\\
			$p_m^o$ & Importance of $ct_{m}^o$\\
			$r_{i,m}^o$ & Ratio of MUs requesting $ct_{m}^o$ via $ed_i$\\
			$q_{i,m}^o$ & \gls{SCQ} service from $ed_i$ for $ct_{m}^o$ \\
			$g_{i,m}^o$ & \makecell[l]{Payment for secure caching of $ct_{m}^o$ on $ed_i$}  \\
            $ \delta_{i,m}^{o}$ & storage size of $ct_m$ \\
			$q_i$ & \gls{SCQ} strategy vector of $ed_i$ \\
			$g_i$ & Payment strategy vector for $ed_i$ \\
			$\mathbf{g}$ & Payment strategy matrix of the \gls{cp} \\
			$\mathbf{q}$ & \gls{SCQ} strategy matrix of all \gls{ed}s \\
			$c_{i}$  &Cost parameter of $ed_i$ with the highest \gls{SCQ}\\
			$\upsilon$ & Adjustment parameter for the $ed_i$\\
			$\psi _{i}$ & Resource consumption of $ed_i$ when it cheats\\
			$\theta$ & Payment adjust parameter.\\
			$\alpha$& Satisfaction parameter of secure \\ & content caching.\\
\bottomrule
		\end{tabular}
\end{table}

\vspace{0.5cm}

\subsection*{Proof of~\autoref{theorem:potential}}
We need to show that for any CP \( o \), if \( \mathbf{g}_o \) changes to \( \mathbf{g}_o' \) while \( \mathbf{g}_{-o} \) remains fixed, then:
\[
U_o(\mathbf{g}_o', \mathbf{g}_{-o}) - U_o(\mathbf{g}_o, \mathbf{g}_{-o}) = P(\mathbf{g}_o', \mathbf{g}_{-o}) - P(\mathbf{g}_o, \mathbf{g}_{-o})
\]
Let:
\begin{itemize}
	\item \( S = (\mathbf{g}_o, \mathbf{g}_{-o}) \): Original strategy profile.
	\item \( S' = (\mathbf{g}_o', \mathbf{g}_{-o}) \): New strategy profile after CP \( o \) changes.
\end{itemize}

\subsubsection*{Step 1: Compute Utility Difference} 

\

\[
U_o(S) = \sum_{i=1}^I \sum_{m=1}^{M_o} \left[A_{i,m}^{o} \log\left(1 + \frac{g_{i,m}^{o} \theta}{2 c_i \nu}\right) - \frac{(g_{i,m}^{o})^2 \theta^2}{2 c_i \nu} \right]
\]
\[
U_o(S') = \sum_{i=1}^I \sum_{m=1}^{M_o} \left[A_{i,m}^{o} \log\left(1 + \frac{g_{i,m}^{'o} \theta}{2 c_i \nu}\right) - \frac{(g_{i,m}^{'o})^2 \theta^2}{2 c_i \nu} \right]
\]
\begin{align}
	U_o(S') - U_o(S)& = \sum_{i=1}^I \sum_{m=1}^{M_o}  \left\{ 
    \left[ A_{i,m}^{o} \log\left(1 + \frac{g_{i,m}^{'o} \theta}{2 c_i \nu}\right) - \frac{(g_{i,m}^{'o})^2 \theta^2}{2 c_i \nu} \right] 
    \right.
    \nonumber \\
	& \quad - \left. \left[ A_{i,m}^{o} \log\left(1 + \frac{g_{i,m}^{o} \theta}{2 c_i \nu}\right) - \frac{(g_{i,m}^{o})^2 \theta^2}{2 c_i \nu} \right] 
    \right\}
\end{align}

\subsubsection*{Step 2: Compute Potential Difference}
\begin{align}
	P(S) = & \sum_{k \neq o} \sum_{i=1}^I \sum_{m=1}^{M_k} \left[A_{i,m}^{k} \log\left(1 + \frac{g_{i,m}^{k} \theta}{2 c_i \nu}\right) - \frac{(g_{i,m}^{k})^2 \theta^2}{2 c_i \nu} \right] \nonumber \\
	+ &  \sum_{i=1}^I \sum_{m=1}^{M_o} \left[ A_{i,m}^{o} \log\left(1 + \frac{g_{i,m}^{o} \theta}{2 c_i \nu}\right) - \frac{(g_{i,m}^{o})^2 \theta^2}{2 c_i \nu} \right]
\end{align}

\begin{align}
	P(S') = &  \sum_{k \neq o} \sum_{i=1}^I \sum_{m=1}^{M_k} \left[A_{i,m}^{k}  \log\left(1 + \frac{g_{i,m}^{k} \theta}{2 c_i \nu}\right) - \frac{(g_{i,m}^{k})^2 \theta^2}{2 c_i \nu} \right] \nonumber \\
	+& \sum_{i=1}^I \sum_{m=1}^{M_o} \left[A_{i,m}^{o}  \log\left(1 + \frac{g_{i,m}^{'o} \theta}{2 c_i \nu}\right) - \frac{(g_{i,m}^{'o})^2 \theta^2}{2 c_i \nu} \right] 
\end{align}

\begin{align*}
	P(S') - P(S) =& \sum_{k \neq o} \sum_{i=1}^I \sum_{m=1}^{M_k} \left[A_{i,m}^{k} \log\left(1 + \frac{g_{i,m}^{k} \theta}{2 c_i \nu}\right) - \frac{(g_{i,m}^{k})^2 \theta^2}{2 c_i \nu} \right] \nonumber \\
	+ &  \sum_{i=1}^I \sum_{m=1}^{M_o} \left[ A_{i,m}^{o} \log\left(1 + \frac{g_{i,m}^{'o} \theta}{2 c_i \nu}\right) - \frac{(g_{i,m}^{'o})^2 \theta^2}{2 c_i \nu} \right] \\
	- &  \sum_{k\neq o} \sum_{i=1}^I \sum_{m=1}^{M_k} \left[A_{i,m}^{k}  \log\left(1 + \frac{g_{i,m}^{k} \theta}{2 c_i \nu}\right) - \frac{(g_{i,m}^{k})^2 \theta^2}{2 c_i \nu} \right] \nonumber \\
	-& \sum_{i=1}^I \sum_{m=1}^{M_o} \left[A_{i,m}^{o}  \log\left(1 + \frac{g_{i,m}^{o} \theta}{2 c_i \nu}\right) - \frac{(g_{i,m}^{o})^2 \theta^2}{2 c_i \nu} \right] \\
	= & \sum_{i=1}^I \sum_{m=1}^{M_o}  \left[ A_{i,m}^{o} \log\left(1 + \frac{g_{i,m}^{'o} \theta}{2 c_i \nu}\right) - \frac{(g_{i,m}^{'o})^2 \theta^2}{2 c_i \nu} \right]  \\
	-& \sum_{i=1}^I \sum_{m=1}^{M_o}  \left[A_{i,m}^{o}  \log\left(1 + \frac{g_{i,m}^{o} \theta}{2 c_i \nu}\right) - \frac{(g_{i,m}^{o})^2 \theta^2}{2 c_i \nu} \right]\\
	= &  \sum_{i=1}^I \sum_{m=1}^{M_o} 
    \left\{ 
    \left[ A_{i,m}^{o} \log\left(1 + \frac{g_{i,m}^{'o} \theta}{2 c_i \nu}\right) - \frac{(g_{i,m}^{'o})^2 \theta^2}{2 c_i \nu} \right]
    \right.
    \\
	& \quad - \left. \left[ A_{i,m}^{o} \log\left(1 + \frac{g_{i,m}^{o} \theta}{2 c_i \nu}\right) - \frac{(g_{i,m}^{o})^2 \theta^2}{2 c_i \nu} \right] 
    \right\} \\
	=& U_o(S') - U_o(S)
\end{align*}

\end{document}